\newtheorem{theorem}{Theorem}
\newtheorem{definition}[theorem]{Definition}
\newtheorem{proposition}[theorem]{Proposition}
\newenvironment{proof}[1][Proof]{\noindent\textbf{#1.} }{\ \hfill\rule{0.5em}{0.5em}}
\definecolor{red  }{rgb}{1,0,0}
\definecolor{blue }{rgb}{0,0,1}
\definecolor{green}{rgb}{0,1,0}
\newcommand{\vs}[1]{\vspace{#1 mm}}
\begin{document}

\begin{titlepage}

\vskip .5in

\begin{center}

{\Large\bf On $q$-deformed infinite-dimensional $n$-algebra } \vskip .5in

{\large Lu Ding$^{a}$,
Xiao-Yu Jia$^{b}$,
Ke Wu$^{b,c}$,
Zhao-Wen Yan$^{d}$,
Wei-Zhong Zhao$^{b,c,}$
\footnote{Corresponding author: zhaowz@cnu.edu.cn}} \\
\vs{10}

$^a${\em Institute of Applied Mathematics, Academy of Mathematics and Systems Science, Chinese
Academy of Sciences, Beijing 100190, China} \\
$^b${\em School of Mathematical Sciences, Capital Normal University,
Beijing 100048, China} \\
$^c${\em Beijing Center for Mathematics and Information
Interdisciplinary Sciences, Beijing 100048, China } \\
$^d${\em School of Mathematical Sciences, Inner Mongolia University,
Hohhot 010021, China}\\

\vskip .2in \vspace{.3in}

\begin{abstract}
The $q$-deformation of the infinite-dimensional $n$-algebra is investigated.
Based on the structure of the $q$-deformed Virasoro-Witt algebra,
we derive a nontrivial $q$-deformed Virasoro-Witt $n$-algebra
which is nothing but a sh-$n$-Lie algebra.
Furthermore in terms of the pseud-differential operators on the quantum plane,
we construct the (co)sine $n$-algebra and the $q$-deformed $SDiff(T^2)$ $n$-algebra.
We prove that they are the sh-$n$-Lie algebras for the case of even $n$.
An explicit physical realization of the (co)sine $n$-algebra is given.

\end{abstract}

\end{center}

{\small KEYWORDS: $q$-deformation, Conformal and $W$ Symmetry, $n$-algebra }


\vfill

\end{titlepage}

\section{Introduction}

Quantum algebras or more precisely quantized universal enveloping algebras
first appeared in connection with the study of the inverse scattering
problem. It is one parameter $q$ deformation of Lie algebras which preserves
the structure of a Hopf algebra and reduces to standard Lie algebra in the
classical limit. The Virasoro algebra is an infinite dimensional Lie algebra
and plays important roles in physics. Its $q$-deformation has been widely
studied in the literature \cite{Curtright1990}-\cite{Sato}. A $q$-deformation
of the centerless Virasoro or Virasoro-Witt (V-W) algebra was first obtained
by Curtright and Zachos \cite{Curtright1990}. Its central extension was
later furnished by Aizawa and Sato \cite{Aizawa}. Chaichian and
Pre$\check{s}$najder \cite{Chaichian1992} proposed a different version of the $q$-deformed
Virasoro algebra by carrying out a Sugawara construction on a $q$-analogue of
an infinite dimensional Heisenberg algebra. Shiraishi et al. \cite{Shiraishi}
presented a $q$-Virasoro algebra $\mathit{Vir}_{q,t}$, where $q$ and $t$ are
two complex parameters. They constructed a free boson realization of this
$q$-Virasoro algebra and showed that singular vectors can be expressed by the
Macdonald symmetric functions. It is similar as the case of the ordinary
Virasoro algebra whose singular vectors are given by the Jack symmetric
functions. It is well-known that there is a remarkable connection between
the Virasoro algebra and the Korteweg-de Vries (KdV) equation \cite%
{Gervais1, Gervais2}. For the $q$-deformed Virasoro algebra, Chaichian et al.
\cite{Chaichian249} showed that it generates the sympletic structure which
can be used for a description of the discretization of the KdV equation.
Furthermore the quantum KdV equations associated with the algebraic symmetry
have been investigated in Refs.\cite{Bazhanov, Fioravanti}. The integrable
one-dimensional quantum spin chains have attracted much interest from
physical and mathematical points of view. One noted that the deformed
Virasoro algebra plays an important role in the study of the XYZ model \cite%
{Lukyanov, Rossi}.

The $W_{\infty}$ algebra is the higher-spin extension of the Virasoro algebra.
Its $q$-deformation has been constructed \cite{Chaichian237, Zha}.
The classical limit of the $W_{\infty}$ algebra gives the so-called $w_{\infty}$ algebra
which is equivalent to the algebra of smooth area-preserving diffeomorphisms
of the cylinder $S^1\times R^1$ \cite{Pope236}.
It is worth to emphasize that the algebra of the area-preserving diffeomorphisms
of the torus $T^2$, i.e., $SDiff(T^2)$ \cite{Floratos201,Antoniadis300} is also an
important infinite-dimensional algebra.
In terms of the Gauss derivatives on the quantum plane,
Kinani et al. \cite{Kinani357} presented the $q$-deformed $SDiff(T^2)$ algebra.
It should be noted that the sine algebra  arises as the unique
Lie algebra deformation of $SDiff(T^2)$ in some suitable
basis. There has been considerable interest in the (super) sine algebra
\cite{Fairlie224}-\cite{Jellal}.

The Nambu 3-algebra was introduced in \cite{Nambu, Takhtajan} as a natural
generalization of a Lie algebra for higher-order algebraic operations.
Recently Bagger and Lambert \cite{BL2007, BL2008}, and Gustavsson \cite%
{Gustavsson} (BLG) found that 3-algebras play an important role in
world-volume description of multiple M2-branes. Due to BLG theory, there has
been considerable interest in the 3-algebra and its application. More
recently there has been the progress in constructing the
infinite-dimensional 3-algebras, such as V-W \cite{Curtright, Curtright2009}%
, Kac-Moody \cite{Lin} and $w_{\infty}$ 3-algebras \cite{Chakrabortty, Chen}%
. Moreover the relation between the infinite-dimensional 3-algebras and the
integrable systems has also been studied. \cite{Chen2012, Chen2013}.

Recently Curtright et al. \cite{Curtright}, constructed a V-W algebra through the use
of su(1,1) enveloping algebra techniques. It is worthwhile to mention that
this ternary algebra depends on a parameter $z$ and is only a Nambu-Lie
algebra when $z=\pm 2i$. Ammar et al. \cite{Ammar} presented a $q$-deformation
of this 3-algebra and noted it carrying the structure of ternary
Hom-Nambu-Lie algebra.
It is well known that the structure and property of $q$-deformed algebra are now very well
understood. However for the $q$-deformed infinite-dimensional 3-algebra,  much less is still known
about its structure and property. As to the $q$-deformed infinite-dimensional $n$-algebra,
to our best knowledge, it has not been reported so far in the existing literature.
The goal of this paper is to construct the $q$-deformed infinite-dimensional $n$-algebras
and explore their intriguing features.

This paper is organized as follows. In section 2, we introduce the definitions of $n$-Lie algebra
and sh-$n$-Lie algebra.
In section 3, we construct the $q$-deformed V-W $n$-algebra.
In section 4, in terms of the pseud-differential operators on the quantum plane,
we construct the (co)sine $n$-algebra and the $q$-deformed $SDiff(T^2)$ $n$-algebra.
An explicit physical realization of the (co)sine $n$-algebra is given in section 5.
We end this paper with the concluding remarks in section 6.


\section{ $n$-Lie algebra and sh-$n$-Lie algebra}

For later convenience, we shall recall the definitions of $n$-Lie algebra
and sh-$n$-Lie algebra in this section. For a more detailed description we
refer the reader to Refs.\cite{Filippov}-\cite{Izquierdo}.

The notion of $n$-Lie algebra or Filippov $n$-algebra was introduced by
Filippov \cite{Filippov}. It is a natural generalization of Lie algebra.

\begin{definition}
\cite{Filippov}An $n$-Lie algebra structure is a linear space $V$ endowed
with a multilinear map called Nambu bracket $[.,\cdots ,.]$: $V^{\otimes
n}\rightarrow V$ satisfying the following properties:\newline
(1). Skew-symmetry
\begin{equation}
\lbrack A_{\sigma (1)},\cdots ,A_{\sigma (n)}]=(-1)^{\epsilon (\sigma
)}[A_{1},\cdots ,A_{n}].
\end{equation}%
(2). Fundamental identity (FI) or Filippov condition
\begin{equation}
\lbrack A_{1},\cdots ,A_{n-1},[B_{1},\cdots
,B_{n}]]=\sum_{k=1}^{n}[B_{1},\cdots ,B_{k-1},[A_{1},\cdots
,A_{n-1},B_{k}],B_{k+1},\cdots ,B_{n}].  \label{eq:eFI}
\end{equation}
\end{definition}

Recently 3-Lie algebra has attracted much interest due to its application in
M-theory. For the case of 3-Lie algebra, the corresponding FI is
\begin{equation}
\lbrack A,B,[C,D,E]]=[[A,B,C],D,E]+[C,[A,B,D],E]+[C,D,[A,B,E]].
\label{eq:FI}
\end{equation}%
We have already seen that an $n$-Lie algebra $A$ is a vector space $A$
endowed with an $n$-ary skew-symmetric multiplication satisfying the
FI (\ref{eq:eFI}). We now turn to the notion of sh-$n$-Lie algebra.

\begin{definition}
\cite{Goze}Let $[.,\cdots ,.]$ be an $n$-ary skewsymmetric product on a
vector space $V$. We say that $(V,[.,\cdots ,.])$ is a sh-$n$-Lie algebra if
$[.,\cdots ,.]$ satisfies the sh-Jacobi's identity
\begin{equation}
\sum_{\sigma \in Sh(n,n-1)}(-1)^{\epsilon (\sigma )}\left[ [x_{\sigma
(1)},\cdots ,x_{\sigma (n)}],x_{\sigma (n+1)},\cdots ,x_{\sigma (2n-1)}%
\right] =0,  \label{eq:shJacobi}
\end{equation}%
for any $x_{i}\in A$, where $Sh(n,n-1)$ is the subset of $\Sigma _{2n-1}$
defined by
\begin{equation*}
Sh(n,n-1)=\{\sigma \in \Sigma _{2n-1},\sigma (1)<\cdots <\sigma (n),\sigma
(n+1)<\cdots <\sigma (2n-1)\}.
\end{equation*}
\end{definition}

In terms of the L$\acute{e}$vi-Civit$\grave{a}$ symbol, i.e.,
\begin{equation}
\epsilon_{j_{1}\cdots j_{p}}^{i_{1}\cdots i_{p}}=\det\left(
\begin{array}{ccc}
\delta_{j_{1}}^{i_{1}} & \cdots & \delta_{j_{p}}^{i_{1}} \\
\vdots &  & \vdots \\
\delta_{j_{1}}^{i_{p}} & \cdots & \delta_{j_{p}}^{i_{p}}%
\end{array}
\right) ,  \label{eq:LV}
\end{equation}
the sh-Jacobi's identity (\ref{eq:shJacobi}) can also be expressed as
\begin{equation}
\epsilon_{m_{1}\cdots m_{2n-1}}^{i_{1}\cdots i_{2n-1}}\left[
[x_{i_{1}},\cdots,x_{i_{n}}],x_{i_{n+1}},\cdots,x_{i_{2n-1}}\right] =0.
\label{emisila sh}
\end{equation}

When $n=2$, both the FI (\ref{eq:eFI}) and sh-Jacobi's
identity (\ref{eq:shJacobi}) become the well-known Jacobi's identity. When $%
n=3$, the FI is given by (\ref{eq:FI}). For this case, the
corresponding sh-Jacobi's identity (\ref{eq:shJacobi}) is
\begin{eqnarray}
&&[[A,B,C],D,E]-[[A,B,D],C,E]+[[A,B,E],C,D]+[[A,C,D],B,E]  \notag
\label{eq:shuffle} \\
&&-[[A,C,E],B,D]+[[A,D,E],B,C]-[[B,C,D],A,E]+[[B,C,E],A,D]  \notag \\
&&-[[B,D,E],A,C]+[[C,D,E],A,B]=0.
\end{eqnarray}

We have briefly introduced the $n$-Lie algebra and sh-$n$-Lie algebra. It
should be noted that any $n$-Lie algebra is a sh-$n$-Lie algebra, but a sh-$%
n $-Lie algebra is an $n$-Lie algebra if and only if any adjoint operator is
a derivation.

\section{ $q$-deformed V-W $n$-algebra}
\subsection{ $q$-deformed V-W 3-algebra}
As a start before investigating the $q$-deformed 3-algebra, let us recall the
case of $q$-deformed algebra. The deformation of the commutator is defined by
\begin{equation}  \label{eq:dc}
[A, B]_{(p,q)}=pAB-qBA.
\end{equation}
It possesses the following properties \cite{Chaichian248, Sato}:
\begin{eqnarray}
&&[A, B]_{(p,q)}=-[B, A]_{(q,p)},  \notag \\
&&[A+B, C]_{(p,q)}=[A, C]_{(p,q)}+[B, C]_{(p,q)},  \notag \\
&&[AB, C]_{(p,q)}=A[B, C]_{(p,r)}+[A, C]_{(r,q)}B,  \notag \\
&&[A, BC]_{(p,q)}=B[A, C]_{(r,q)}+[A, B]_{(p,r)}C,
\end{eqnarray}
and the $q$-Jacobi identity
\begin{eqnarray}
&&[A, [B, C]_{(q_1,q_1^{-1})}]_{(q_3/q_2, q_2/q_3)}+[B, [C,
A]_{(q_2,q_2^{-1})}]_{(q_1/q_3, q_3/q_1)}  \notag \\
&&+[C, [A, B]_{(q_3,q_3^{-1})}]_{(q_2/q_1, q_1/q_2)}=0.
\end{eqnarray}

The Virasoro algebra is an infinite dimensional Lie algebra and plays
important roles in physics. The V-W algebra is indeed the centerless
Virasoro algebra. It is given by
\begin{equation}
\lbrack L_{m},L_{n}]=(m-n)L_{m+n}.  \label{eq:virwitt}
\end{equation}%

To construct the deformed V-W algebra, let us take the $q$-deformed generators
\begin{equation}  \label{eq:generator}
L_m=-q^{N}(a^\dag)^{m+1}a,
\end{equation}
where the $q$-deformed oscillator is deformed by the following relations \cite%
{MacFarlane}-\cite{Hayashi}:
\begin{eqnarray}  \label{eq:oscillator}
&&a a^\dag-qa^\dag a=q^{-N},\ \ a a^\dag=[N],  \notag \\
&&[N,a]=-a,\ \ [N,a^\dag]=a^\dag.
\end{eqnarray}
Substituting the $q$-generators (\ref{eq:generator}) into the commutator (\ref%
{eq:dc}) and using the $q$-deformed oscillator (\ref{eq:oscillator}), it leads
to the so-called $q$-deformed V-W algebra \cite{Curtright1990}
\begin{eqnarray}  \label{eq:qvir}
[L_m, L_n]_{(q^{m-n}, q^{n-m})}=q^{m-n}L_mL_n-q^{n-m}L_nL_m =[m-n]L_{m+n},
\end{eqnarray}
where $[k]=\frac{q^k-q^{-k}}{q-q^{-1}}$. In the limit $q\rightarrow 1$, (\ref%
{eq:qvir}) reduces to the V-W algebra (\ref{eq:virwitt})

Let us define the star product by
\begin{eqnarray}  \label{eq:qpro}
&&L_n*[L_m, L_k]_{(q^{m-k}, q^{k-m})}=q^{2n-m-k}L_n[L_m, L_k]_{(q^{m-k},
q^{k-m})},  \notag \\
&&[L_m, L_k]_{(q^{m-k}, q^{k-m})}*L_n=q^{m+k-2n}[L_m, L_k]_{(q^{m-k},
q^{k-m})}L_n.
\end{eqnarray}
Then we have
\begin{eqnarray}  \label{eq:cqj}
&&L_n*[L_m, L_k]_{(q^{m-k}, q^{k-m})}-[L_m, L_k]_{(q^{m-k}, q^{k-m})}*L_n
\notag \\
&&=[L_n, [L_m, L_k]_{(q^{m-k}, q^{k-m})}]_{(q^{2n-m-k}, q^{m+k-2n})}.
\end{eqnarray}
By means of (\ref{eq:cqj}), one can confirm the following $q$-Jacobi identity
\cite{Chaichian248} satisfied by the $q$-deformed V-W algebra (\ref{eq:qvir}):
\begin{eqnarray}  \label{eq:qji}
[L_n, [L_m, L_k]_{(q^{m-k}, q^{k-m})}]_{(q^{2n-m-k}, q^{m+k-2n})}+cycl.
perms.=0.
\end{eqnarray}

Let us now turn our attention to the case of 3-algebra. The operator Nambu
3-bracket is defined to be a sum of single operators multiplying commutators
of the remaining two \cite{Nambu}, i.e.,
\begin{eqnarray}  \label{eq:3bracket}
[A, B, C]=A[B, C]+B[C, A]+C[A, B],
\end{eqnarray}
where $[A, B]=AB-BA$.

For the $q$-deformed V-W algebra (\ref{eq:qvir}), we have already seen that
the $q$-Jacobi identity (\ref{eq:qji}) is guaranteed to hold. It is worth to
emphasize that the star product (\ref{eq:qpro}) plays a pivotal role in the
$q$-Jacobi identity. In terms of the star product (\ref{eq:qpro}), let us
define the $q$-3-bracket as follows:
\begin{eqnarray}
{\llbracket}L_{m},L_{n},L_{k}{\rrbracket} &=&L_{m}\ast \lbrack
L_{n},L_{k}]_{(q^{n-k},q^{k-n})}+L_{n}\ast \lbrack
L_{k},L_{m}]_{(q^{k-m},q^{m-k})}  \notag  \label{eq:q3bracket} \\
&+&L_{k}\ast \lbrack L_{m},L_{n}]_{(q^{m-n},q^{n-m})}.
\end{eqnarray}%
By means of (\ref{eq:qvir}) and (\ref{eq:qpro}), we may derive the following
$q$-deformed 3-algebra from (\ref{eq:q3bracket}):
\begin{eqnarray}
{\llbracket}L_{m},L_{n},L_{k}{\rrbracket} &=&\frac{1}{q-q^{-1}}%
([2m-2k]+[2k-2n]+[2n-2m])L_{m+n+k}  \notag \\
&=&\left( q-q^{-1}\right) ([m-n][m-k][n-k])L_{m+n+k}  \notag \\
&=&-\frac{1}{\left( q-q^{-1}\right) ^{2}}\det \left(
\begin{array}{ccc}
q^{-2m} & q^{-2n} & q^{-2k} \\
1 & 1 & 1 \\
q^{2m} & q^{2n} & q^{2k}%
\end{array}%
\right) L_{m+n+k}.  \label{eq:qv3alg}
\end{eqnarray}

Performing lengthy but straightforward calculations, we find that (\ref%
{eq:qv3alg}) satisfies the sh-Jacobi's identity (\ref{eq:shuffle}), but the
FI (\ref{eq:FI}) does not hold. It is easy to verify that
the skew-symmetry holds for this ternary algebra
\begin{equation*}
{\llbracket}L_{m},L_{n},L_{k}{\rrbracket}=-{\llbracket}L_{n},L_{m},L_{k}{%
\rrbracket}=-{\llbracket}L_{k},L_{n},L_{m}{\rrbracket}.
\end{equation*}%
Therefore the $q$-deformed V-W 3-algebra (\ref{eq:qv3alg}) is indeed a
sh-3-Lie algebra. In the limit $q\rightarrow 1$, (\ref{eq:qv3alg}) reduces
to the null 3-algebra derived in \cite{Curtright2009},
\begin{equation*}
\lbrack L_{m},L_{n},L_{k}]=0.
\end{equation*}%
The FI (\ref{eq:FI}) is trivially satisfied for this null
3-algebra.

\subsection{$q$-deformed V-W $n$-algebra}
Now encouraged by the possibility of constructing the nontrivial sh-3-Lie
algebra (\ref{eq:qv3alg}), it would be interesting to study further and see
whether one could construct the $q$-deformed V-W $n$-algebra with a genuine sh-%
$n$-Lie algebra structure. In this section we give affirmative answer to
this question.

The $n$-bracket with $n\geq3$ is defined by
\begin{eqnarray}  \label{eq:lbracket}
[L_{i_{1}}, L_{i_{2}}, \cdots, L_{i_{n}}]=\sum\limits_{s=1}^{n}\left(
-1\right) ^{s+1}L_{i_{s}}\lbrack L_{i_{1}}, L_{i_{2}}, \cdots, \widehat{%
L_{i_{s}}}, \cdots, L_{i_{n}}].
\end{eqnarray}
Here we denote a notational convention used frequently in the rest of this
paper. Namely for any arbitrary symbol $Z$, the hat symbol $\hat{Z}$ stands
for the term that is omitted.

Let us define a $q$-$n$-bracket as follows:
\begin{equation}
{\llbracket}L_{i_{1}},L_{i_{2}},\cdots ,L_{i_{n}}{\rrbracket}%
=\sum\limits_{s=1}^{n}\left( -1\right) ^{s+1}L_{i_{s}}\ast {\llbracket}%
L_{i_{1}},L_{i_{2}},\cdots ,\widehat{L_{i_{s}}},\cdots ,L_{i_{n}}{\rrbracket}%
,  \label{eq:qlbracket}
\end{equation}%
where the general star product is given by
\begin{equation}
L_{i_{1}}\ast {\llbracket}L_{i_{2}},L_{i_{3}},\cdots ,L_{i_{n}}{\rrbracket}%
=q^{xi_{1}+y\left( i_{2}+\cdots +i_{n}\right) }L_{i_{1}}{\llbracket}%
L_{i_{2}},L_{i_{3}},\cdots ,L_{i_{n}}{\rrbracket},  \label{eq:gsp}
\end{equation}%
in which $(x=2,y=-1)$ for $n=3$, $(x=n,y=0)$ for even $n\geq 4$ and $%
(x=n-1,y=-2)$ for odd $n\geq 5$. As done in the case of $q$-3-bracket (\ref%
{eq:q3bracket}), we introduce the general star product (\ref{eq:gsp}) into
the $q$-$n$-bracket here. It should be noted that the general star product (%
\ref{eq:gsp}) will play an important role in deriving the desired $q$-deformed
V-W $n$-algebra.

\begin{theorem}
The $q$-generators (\ref{eq:generator}) satisfy the following closed algebraic
structure relation:
\end{theorem}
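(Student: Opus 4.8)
The plan is to establish the relation by induction on $n$, driven by the recursive definition (\ref{eq:qlbracket}) of the $q$-$n$-bracket. The base cases are already in hand: for $n=2$ the relation is the $q$-deformed V-W algebra (\ref{eq:qvir}), $\llbracket L_{a},L_{b}\rrbracket=[a-b]L_{a+b}$, and for $n=3$ it is the $q$-deformed V-W $3$-algebra (\ref{eq:qv3alg}). In both cases the bracket is a scalar multiple of $L_{i_{1}+\cdots+i_{n}}$ whose coefficient is a Vandermonde-type determinant in the variables $q^{\pm c\,i_{j}}$ (equivalently a product of $q$-numbers $[i_{s}-i_{t}]$, up to an overall $q$-power in $\Sigma:=i_{1}+\cdots+i_{n}$). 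We take this as the inductive hypothesis, denoting by $F_{m}(i_{1},\dots,i_{m})$ the coefficient at level $m$ and allowing its $q^{\Sigma}$-prefactor to depend on the parity of $m$, in accordance with the three cases of (\ref{eq:gsp}).

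For the inductive step, insert the hypothesis for the $(n-1)$-brackets appearing in (\ref{eq:qlbracket}) and apply the general star product (\ref{eq:gsp}); the $s$-th summand becomes $(-1)^{s+1}q^{x i_{s}+y(\Sigma-i_{s})}F_{n-1}(i_{1},\dots,\widehat{i_{s}},\dots,i_{n})\,L_{i_{s}}L_{\Sigma-i_{s}}$. The essential computational input is the normal-ordering of $L_{i_{s}}L_{\Sigma-i_{s}}$ through the $q$-oscillator relations (\ref{eq:oscillator}): using $a(a^{\dag})^{k}=q^{k}(a^{\dag})^{k}a+[k](a^{\dag})^{k-1}q^{-N}$ together with $[N,a^{\dag}]=a^{\dag}$ one obtains
\begin{equation*}
L_{m}L_{n}=-q^{\,n+1}[n+1]\,L_{m+n}+q^{\,2N+n-m+1}(a^{\dag})^{m+n+2}a^{2}.
\end{equation*}
Hence $L_{i_{s}}L_{\Sigma-i_{s}}$ is \emph{not} proportional to $L_{\Sigma}$: it also carries a quadratic operator $\propto q^{2N}(a^{\dag})^{\Sigma+2}a^{2}$, with an $s$-dependent scalar factor. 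The crux of the proof is that these quadratic contributions cancel in the alternating sum over $s$, so that only the $L_{\Sigma}$-part survives.

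Collecting the quadratic pieces, this cancellation reduces to the identity
\begin{equation*}
\sum_{s=1}^{n}(-1)^{s+1}q^{(x-y-2)i_{s}}\,F_{n-1}(i_{1},\dots,\widehat{i_{s}},\dots,i_{n})=0 ,
\end{equation*}
which holds because the exponent $x-y-2$ is precisely arranged to coincide with one of the row-exponents already present in the determinant $F_{n-1}$, so that the associated $n\times n$ determinant has two proportional rows; this is exactly where the parity-dependent choice of $(x,y)$ in (\ref{eq:gsp}) is forced. What remains is
\begin{equation*}
\llbracket L_{i_{1}},\dots,L_{i_{n}}\rrbracket=-q^{(y+1)\Sigma+1}\Big(\sum_{s=1}^{n}(-1)^{s+1}q^{(x-y-1)i_{s}}[\Sigma-i_{s}+1]\,F_{n-1}(i_{1},\dots,\widehat{i_{s}},\dots,i_{n})\Big)L_{\Sigma},
\end{equation*}
and a Laplace expansion together with a $q$-number identity identifies the bracketed sum with $F_{n}(i_{1},\dots,i_{n})$, i.e.\ with the determinant in the asserted closed form (checked separately for even and odd $n$, and matching (\ref{eq:qv3alg}) when $n=3$). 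It then remains to verify total skew-symmetry: antisymmetry in $i_{2},\dots,i_{n}$ is immediate from (\ref{eq:qlbracket}) and the hypothesis, while antisymmetry under interchanging $i_{1}$ with any other index is a property of the resulting Vandermonde/determinant form of $F_{n}$.

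The main obstacle is the quadratic-cancellation step and the attendant bookkeeping of $q$-exponents across the parity cases: one has to check that each of the three choices of $(x,y)$ in (\ref{eq:gsp}) is exactly the one making the quadratic terms telescope to zero while keeping the surviving coefficient nonzero, and that the determinant identity assembling $F_{n}$ from the $F_{n-1}$'s closes on the stated formula for both parities.
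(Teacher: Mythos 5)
Your proposal follows essentially the same route as the paper's own proof: induction on $n$, normal-ordering $L_{i_s}L_{\Sigma-i_s}$ via the $q$-oscillator relations into an $L_{\Sigma}$ part plus a quadratic $q^{2N}(a^{\dag})^{\Sigma+2}a^{2}$ part, and killing the unwanted contributions because their alternating sum is the Laplace expansion of a determinant with two proportional rows — which is exactly how the paper's auxiliary determinant $A$ vanishes under the parity-dependent choice of $(x,y)$, with the surviving sum reassembling into the Vandermonde form. The argument and the key cancellation mechanism are correct and match the paper's.
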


\begin{align}
& {\llbracket}L_{i_{1}},L_{i_{2}},\cdots,L_{i_{n}}{\rrbracket}=\frac {%
sign\left( n\right) }{\left( q-q^{-1}\right) ^{n-1}}  \notag
\label{eq:qlrbracket} \\
& \det\left(
\begin{array}{cccc}
q^{-2\left\lfloor \frac{n-1}{2}\right\rfloor i_{1}} & q^{-2\left\lfloor
\frac{n-1}{2}\right\rfloor i_{2}} & \cdots & q^{-2\left\lfloor \frac{n-1}{2}%
\right\rfloor i_{n}} \\
q^{2\left( -\left\lfloor \frac{n-1}{2}\right\rfloor +1\right) i_{1}} &
q^{2\left( -\left\lfloor \frac{n-1}{2}\right\rfloor +1\right) i_{2}} & \cdots
& q^{2\left( -\left\lfloor \frac{n-1}{2}\right\rfloor +1\right) i_{n}} \\
\vdots & \vdots & \vdots & \vdots \\
q^{2\left( \left\lfloor \frac{n}{2}\right\rfloor -1\right) i_{1}} &
q^{2\left( \left\lfloor \frac{n}{2}\right\rfloor -1\right) i_{2}} & \cdots &
q^{2\left( \left\lfloor \frac{n}{2}\right\rfloor -1\right) i_{n}} \\
q^{2\left\lfloor \frac{n}{2}\right\rfloor i_{1}} & q^{2\left\lfloor \frac {n%
}{2}\right\rfloor i_{2}} & \cdots & q^{2\left\lfloor \frac{n}{2}%
\right\rfloor i_{n}}%
\end{array}
\right) L_{\Sigma_{l=1}^{n}i_{l}},
\end{align}
where $\left\lfloor n\right\rfloor =Max\{{m\in\mathbf{Z}|m\leq n\}}$ is the
floor function, $sign\left( n\right) $ is the signature function, i.e., $%
sign\left( n\right) =\left\{
\begin{array}{c}
1,\text{ for }n\text{ mod }4=0,1 \\
-1,\text{ for }n\text{ mod }4=2,3%
\end{array}
.\right. $

\begin{proof}
Let us confirm this by the mathematical induction for $n$. Equation (\ref%
{eq:qv3alg}) indicates that (\ref{eq:qlrbracket}) is satisfied for $n=3$. We
suppose (\ref{eq:qlrbracket}) is satisfied for $n$. By means of (\ref%
{eq:qlbracket}), we have
\begin{align}
& {\llbracket}L_{i_{1}},L_{i_{2}},\cdots ,L_{i_{n+1}}{\rrbracket}  \notag \\
& =\frac{sign\left( n\right) }{\left( q-q^{-1}\right) ^{n-1}}A\left[
q^{-\Sigma _{j=1}^{n+1}i_{j}-1}q^{2N}\left( a^{+}\right) ^{\Sigma
_{j=1}^{n+1}i_{j}+2}a^{2}-\frac{1}{\left( q-q^{-1}\right) }L_{\Sigma
_{j=1}^{n+1}i_{j}}\right] +\frac{sign\left( n\right) }{\left(
q-q^{-1}\right) ^{n}}  \notag \\
& \det \left(
\begin{array}{ccccc}
q^{xi_{1}} & \cdots & q^{xi_{s}} & \cdots & q^{xi_{n+1}} \\
q^{\left( y-2\left\lfloor \frac{n-1}{2}\right\rfloor \right) i_{1}} & \cdots
& q^{\left( y-2\left\lfloor \frac{n-1}{2}\right\rfloor \right) i_{s}} &
\cdots & q^{\left( y-2\left\lfloor \frac{n-1}{2}\right\rfloor \right)
i_{n+1}} \\
q^{\left( y-2\left\lfloor \frac{n-1}{2}\right\rfloor +2\right) i_{1}} &
\cdots & q^{\left( y-2\left\lfloor \frac{n-1}{2}\right\rfloor +2\right)
i_{s}} & \cdots & q^{\left( y-2\left\lfloor \frac{n-1}{2}\right\rfloor
+2\right) i_{n+1}} \\
\vdots & \vdots & \vdots & \vdots & \vdots \\
q^{\left( y+2\left\lfloor \frac{n}{2}\right\rfloor \right) i_{1}} & \cdots &
q^{\left( y+2\left\lfloor \frac{n}{2}\right\rfloor \right) i_{s}} & \cdots &
q^{\left( y+2\left\lfloor \frac{n}{2}\right\rfloor \right) i_{n+1}}%
\end{array}%
\right) L_{\Sigma _{l=1}^{n+1}i_{l}},  \label{eq:det2}
\end{align}%
where $A=\det \left(
\begin{array}{ccccc}
q^{xi_{1}} & \cdots & q^{xi_{s}} & \cdots & q^{xi_{n+1}} \\
q^{\left( y-2\left\lfloor \frac{n-1}{2}\right\rfloor +2\right) i_{1}} &
\cdots & q^{\left( y-2\left\lfloor \frac{n-1}{2}\right\rfloor +2\right)
i_{s}} & \cdots & q^{\left( y-2\left\lfloor \frac{n-1}{2}\right\rfloor
+2\right) i_{n+1}} \\
q^{\left( y-2\left\lfloor \frac{n-1}{2}\right\rfloor +4\right) i_{1}} &
\cdots & q^{\left( y-2\left\lfloor \frac{n-1}{2}\right\rfloor +4\right)
i_{s}} & \cdots & q^{\left( y-2\left\lfloor \frac{n-1}{2}\right\rfloor
+4\right) i_{n+1}} \\
\vdots & \vdots & \vdots & \vdots & \vdots \\
q^{\left( y+2\left\lfloor \frac{n}{2}\right\rfloor +2\right) i_{1}} & \cdots
& q^{\left( y+2\left\lfloor \frac{n}{2}\right\rfloor +2\right) i_{s}} &
\cdots & q^{\left( y+2\left\lfloor \frac{n}{2}\right\rfloor +2\right)
i_{n+1}}%
\end{array}%
\right) $.

Substituting $(x=n,y=-2)$ for even $n$ and $(x=n+1,y=0)$ for odd $n$ into (%
\ref{eq:det2}), respectively, we find that the determinate $A$ is zero.
After a straightforward calculation for the second determinate in (\ref%
{eq:det2}), we obtain the explicit form of $(n+1)$-bracket (\ref{eq:det2})
\begin{align*}
& {\llbracket}L_{i_{1}},L_{i_{2}},\cdots ,L_{i_{n+1}}{\rrbracket}=\frac{%
sign\left( n+1\right) }{\left( q-q^{-1}\right) ^{n}} \\
& \det \left(
\begin{array}{cccc}
q^{-2\left\lfloor \frac{n}{2}\right\rfloor i_{1}} & q^{-2\left\lfloor \frac{n%
}{2}\right\rfloor i_{2}} & \cdots & q^{-2\left\lfloor \frac{n}{2}%
\right\rfloor i_{n+1}} \\
q^{2\left( -\left\lfloor \frac{n}{2}\right\rfloor +1\right) i_{1}} &
q^{2\left( -\left\lfloor \frac{n}{2}\right\rfloor +1\right) i_{2}} & \cdots
& q^{2\left( -\left\lfloor \frac{n}{2}\right\rfloor +1\right) i_{n+1}} \\
\vdots & \vdots & \vdots & \vdots \\
q^{2\left( \left\lfloor \frac{n+1}{2}\right\rfloor -1\right) i_{1}} &
q^{2\left( \left\lfloor \frac{n+1}{2}\right\rfloor -1\right) i_{2}} & \cdots
& q^{2\left( \left\lfloor \frac{n+1}{2}\right\rfloor -1\right) i_{n+1}} \\
q^{2\left\lfloor \frac{n+1}{2}\right\rfloor i_{1}} & q^{2\left\lfloor \frac{%
n+1}{2}\right\rfloor i_{2}} & \cdots & q^{2\left\lfloor \frac{n+1}{2}%
\right\rfloor i_{n+1}}%
\end{array}%
\right) L_{\Sigma _{l=1}^{n+1}i_{l}},
\end{align*}%
which shows that (\ref{eq:qlrbracket}) is satisfied for $n+1$. Now the proof
is completed.\bigskip
\end{proof}

For the $q$-3-bracket (\ref{eq:qv3alg}), we already recognize that it
satisfies the sh-Jacobi's identity (\ref{eq:shuffle}), but the FI (\ref{eq:eFI})
does not hold. Let us consider the case of the $q$-$n$%
-bracket (\ref{eq:qlrbracket}). Taking $A_{i}=L_{-i-1},i=1,2,\cdots ,n-2$, $%
A_{n-1}=L_{\frac{\left( n-1\right) n}{2}}$ and $B_{j}=L_{j-1},j=1,2\cdots ,n$
in (\ref{eq:eFI}), straightforward calculation shows that the left-hand side
of (\ref{eq:eFI}) equals zero, but its right-hand side does not. It
indicates that the FI (\ref{eq:eFI}) does not hold for (\ref%
{eq:qlrbracket}). Therefore the $q$-$n$-bracket relation (\ref{eq:qlrbracket})
is not an $n$-Lie algebra. In spite of this negative result it is
instructive to pursue the analysis of the $q$-$n$-bracket (\ref{eq:qlrbracket}%
).

\begin{proposition}
\label{th:4}When $n\geq 3,$ the $q$-$n$-bracket relation (\ref{eq:qlrbracket})
is a sh-$n$-Lie algebra.
\end{proposition}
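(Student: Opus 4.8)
The plan is to exploit the explicit determinant formula (\ref{eq:qlrbracket}) rather than wrestling with the bracket definition directly. Write ${\llbracket}L_{i_1},\dots,L_{i_n}{\rrbracket}=c(i_1,\dots,i_n)\,L_{\Sigma i_l}$ where $c$ is (up to the fixed constant $sign(n)/(q-q^{-1})^{n-1}$) the Vandermonde-type determinant of the $n\times n$ matrix whose $(r,j)$ entry is $q^{2\alpha_r i_j}$ with the exponents $\alpha_r$ running over the consecutive integers $-\lfloor\frac{n-1}{2}\rfloor,\dots,\lfloor\frac{n}{2}\rfloor$. The key structural facts I would isolate first are: (i) $c$ is totally antisymmetric in its $n$ arguments, since permuting columns of a determinant introduces the sign; and (ii) $c$ is an alternating Laurent polynomial, so it vanishes whenever two arguments coincide — this is exactly the skew-symmetry already noted in the $n=3$ case and it holds for all $n$ by the same determinant argument.

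With that in hand, the substitution of (\ref{eq:qlrbracket}) into the sh-Jacobi identity (\ref{eq:shJacobi})/(\ref{emisila sh}) turns the whole left-hand side into a single scalar times $L_{\Sigma_{l=1}^{2n-1} i_l}$, because every shuffled term has the same total index sum. So the proposition reduces to the purely combinatorial identity
\begin{equation*}
\sum_{\sigma\in Sh(n,n-1)}(-1)^{\epsilon(\sigma)}\,c\bigl(i_{\sigma(1)},\dots,i_{\sigma(n)}\bigr)\,d\bigl(\textstyle\sum_{l=1}^n i_{\sigma(l)},\,i_{\sigma(n+1)},\dots,i_{\sigma(2n-1)}\bigr)=0,
\end{equation*}
where $d$ is the structure constant of the ordinary (Laurent-polynomial-valued) bracket $[L_a,L_{j_1},\dots,L_{j_{n-1}}]$ read off from the $n=\,$``outer'' slot. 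Using antisymmetry of both $c$ and $d$, one can replace the sum over shuffles by $\tfrac{1}{n!\,(n-1)!}$ times the sum over all of $\Sigma_{2n-1}$, so it suffices to show the fully antisymmetrized scalar vanishes identically as a Laurent polynomial in $q^{2i_1},\dots,q^{2i_{2n-1}}$.

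The cleanest route from there, which I would take, is to introduce commuting variables $t_k=q^{2i_k}$ and recognize $c(i_1,\dots,i_n)\,L_{\Sigma i}$ (resp. $d$) as a concrete alternating polynomial: $c$ is a generalized Vandermonde $\det(t_j^{\alpha_r})$, i.e. a Schur polynomial times the Vandermonde. Then the doubly-nested antisymmetrization is a Schur-function/Laplace-expansion computation: expanding the outer $n\times n$ determinant along its columns and reassembling, the $\Sigma_{2n-1}$-sum collapses to the statement that a certain $(2n-1)\times(2n-1)$ determinant has two equal (or linearly dependent) rows — precisely the mechanism by which the ordinary $n$-bracket of the classical $w_\infty$-type ``SDiff'' construction is known to satisfy sh-Jacobi. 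Concretely I would mirror the argument used for $w_\infty$ $n$-algebras (cf. the references to Chakrabortty–Chen), where the analogous identity follows because $\det$-of-$\det$ telescopes. I expect the main obstacle to be bookkeeping: getting the exponent ranges $\{-\lfloor\frac{n-1}{2}\rfloor,\dots,\lfloor\frac{n}{2}\rfloor\}$ of the inner and outer determinants to line up correctly so that the combined $(2n-1)$-variable expression is genuinely the antisymmetrization of a polynomial of degree too low to be a nonzero alternating polynomial in $2n-1$ variables — that degree/vanishing count is what forces the sum to zero, and it must be checked separately for $n$ even versus $n$ odd because $\lfloor\frac{n-1}{2}\rfloor$ and $\lfloor\frac{n}{2}\rfloor$ behave differently, exactly as the piecewise definition of the star product (\ref{eq:gsp}) already signals. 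Once that degree bound is in place the skew-symmetry from (i) finishes the proof that (\ref{eq:qlrbracket}) is a sh-$n$-Lie algebra.
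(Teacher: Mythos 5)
Your proposal is correct and follows essentially the same route as the paper's proof: reducing the shuffle sum to a full antisymmetrization of the scalar structure constants, then expanding the outer Vandermonde-type determinant along the column carrying $\Sigma_l i_{\sigma(l)}$ (your Laplace expansion is the paper's sum over $k$ in (\ref{eq:sht})), and killing each term because the resulting $(2n-1)$-variable alternating expression has a repeated exponent --- exactly the paper's observation that $\alpha$ in (\ref{eq:alpha}) assigns equal coefficients to two of the antisymmetrized indices $j_{\mu}$. The bookkeeping you defer does check out: each term's exponent multiset is $\{\alpha_{r}+\alpha_{k}\}_{r=1}^{n}\cup\{\alpha_{j}\}_{j\neq k}$ with the $\alpha$'s consecutive integers, which sits in a window of width at most $(n-1)+\lfloor n/2\rfloor<2n-2$ for $n\geq 3$, so $2n-1$ distinct exponents are impossible and the antisymmetrization vanishes.
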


\begin{proof}
Let us first focus on (\ref{eq:qlrbracket}) with odd $n$. In terms of the L$%
\acute{e}$vi-Civit$\grave{a}$ symbol (\ref{eq:LV}), we can rewrite $(2n+1)$%
-bracket (\ref{eq:qlrbracket}) as
\begin{equation}
{\llbracket}L_{i_{1}},\cdots ,L_{i_{2n+1}}{\rrbracket}=\frac{sign\left(
2n+1\right) }{\left( q-q^{-1}\right) ^{2n}}\epsilon _{i_{1}\cdots
i_{2n+1}}^{j_{1}\cdots j_{2n+1}}q^{-2nj_{1}+2\left( -n+1\right) j_{2}+\cdots
+2\left( n-1\right) j_{2n}+2nj_{2n+1}}L_{\Sigma _{l=1}^{2n+1}i_{l}}.
\label{eq:qlodd}
\end{equation}%
Then let us use the expression (\ref{eq:qlodd}) to calculate ${\llbracket%
\llbracket}L_{i_{1}},\cdots ,L_{i_{2n+1}}{\rrbracket},L_{i_{2n+2}},\cdots
,L_{i_{4n+1}}{\rrbracket}$. It leads to
\begin{align}
& {\llbracket\llbracket}L_{i_{1}},\cdots ,L_{i_{2n+1}}{\rrbracket}%
,L_{i_{2n+2}},\cdots ,L_{i_{4n+1}}{\rrbracket}  \notag  \label{eq:sht} \\
& =\sum_{k=2}^{2n+2}\frac{\left( -1\right) ^{k}}{\left( q-q^{-1}\right) ^{4n}%
}\epsilon _{i_{1}\cdots i_{2n+1}}^{j_{1}\cdots j_{2n+1}}\epsilon
_{i_{2n+2}\cdots i_{4n+1}}^{j_{2n+2}\cdots \widehat{j_{2n+k}}\cdots
j_{4n+2}}q^{2\left( -2n+k-2\right) j_{1}+2\left( -2n+k-1\right) j_{2}+\cdots
+2\left( k-2\right) j_{2n+1}}  \notag \\
& q^{-2nj_{2n+2}+\cdots +2\left( -n+k-3\right) j_{2n+k-1}+\widehat{2\left(
-n+k-2\right) j_{2n+k}}+2\left( -n+k-1\right) j_{2n+k+1}\cdots
+2nj_{4n+2}}L_{\Sigma _{l=1}^{4n+1}i_{l}}.
\end{align}

Substituting (\ref{eq:sht}) into the left-hand side of $\left( \ref{emisila
sh}\right) $, we obtain
\begin{align}
& \epsilon _{m_{1}\cdots m_{4n+1}}^{i_{1}\cdots i_{4n+1}}{\llbracket%
\llbracket}L_{i_{1}},\cdots ,L_{i_{2n+1}}{\rrbracket},L_{i_{2n+2}},\cdots
,L_{i_{4n+1}}{\rrbracket}  \notag  \label{eq:psht} \\
& =\frac{\left( 2n+1\right) !\left( 2n\right) !}{\left( q-q^{-1}\right) ^{4n}%
}\sum_{k=2}^{2n+2}\left( -1\right) ^{k}\epsilon _{m_{1}\cdots
m_{4n+1}}^{j_{1}\cdots \widehat{j_{2n+k}}\cdots j_{4n+2}}q^{\alpha
}L_{\Sigma _{l=1}^{4n+1}i_{l}},
\end{align}%
where the power of $q$ is given by
\begin{eqnarray}
\alpha &=&2\left( -2n+k-2\right) j_{1}+2\left( -2n+k-1\right) j_{2}+\cdots
+2\left( k-2\right) j_{2n+1}  \notag  \label{eq:alpha} \\
&-2&nj_{2n+2}+\cdots +\widehat{2\left( -n+k-2\right) j_{2n+k}}+\cdots
+2nj_{4n+2},
\end{eqnarray}%
and the following formula is useful in simplifying expression:
\begin{equation}
\epsilon _{m_{1}\cdots m_{n}}^{i_{1}\cdots i_{n}}\epsilon _{i_{1}\cdots
i_{k}}^{j_{1}\cdots j_{k}}=k!\epsilon _{m_{1}\cdots m_{n}}^{j_{1}\cdots
j_{k}i_{k+1}\cdots i_{n}}.\label{equal of emixi}
\end{equation}%
From the expression of $\alpha $ (\ref{eq:alpha}), we observe that the
coefficients of two different $j_{\mu }$ should be equal. Since $\epsilon
_{1\cdots 4n+1}^{j_{1}\cdots \widehat{j_{2n+k}}\cdots j_{4n+2}}$ is
completely antisymmetric, it is easy to see that (\ref{eq:psht}) equals
zero. It indicates that the sh-Jacobi's identity is satisfied by (\ref%
{eq:qlrbracket}) with odd $n$.

For the case of (\ref{eq:qlrbracket}) with even $n$, by the similar way, we
can confirm the corresponding sh-Jacobi's identity. Taking the above
results, we may conclude that the sh-Jacobi's identity (\ref{eq:shJacobi})
does hold for (\ref{eq:qlrbracket}). Since the structure constants are
determined by the the determinant, $n$-bracket (\ref{eq:qlrbracket}) is
anticommutative. Based on the above analysis, it is clear that the
$q$-deformed V-W $n$-algebra is indeed a sh-$n$-Lie algebra.
\end{proof}

We have constructed the $q$-deformed V-W $n$-algebra (\ref{eq:qlrbracket}).
It should be noted that the structure constant of this $q$-deformed
infinite-dimensional $n$-algebra is  determined by the Vandermonde
determinant. In the limit $q\rightarrow 1$, it is easy to see that
(\ref{eq:qlrbracket}) reduces to the null $n$-algebra.
It is also interesting to note
that the structure constant of the $q$-deformed V-W $(n-1)$-algebra
can be induced from that of (\ref{eq:qlrbracket}).
More precisely, it is equal to $\left( -1\right) ^{n-1}\left(q-q^{-1}\right) $
times the structure constant of (\ref{eq:qlrbracket}), where
the original Vandermonde determinant in (\ref{eq:qlrbracket})
is replaced by the $(h\left( n\right) ,n)$-minor of its Vandermonde matrix,
$h\left( n\right) $ takes $1$ and $n$
for odd and even $n$, respectively.

Let us list first few $q$-deformed V-W $n$-algebras as follows:
\begin{eqnarray}
\bullet &&{\llbracket}L_{i_{1}},L_{i_{2}},L_{i_{3}},L_{i_{4}}{\rrbracket}
\notag \\
&=&\left( q-q^{-1}\right) ^{-3}\det \left(
\begin{array}{cccc}
q^{-2i_{1}} & q^{-2i_{2}} & q^{-2i_{3}} & q^{-2i_{4}} \\
1 & 1 & 1 & 1 \\
q^{2i_{1}} & q^{2i_{2}} & q^{2i_{3}} & q^{2i_{4}} \\
q^{4i_{1}} & q^{4i_{2}} & q^{4i_{3}} & q^{4i_{4}}%
\end{array}%
\right) L_{\sum_{k=1}^{4}i_{k}}  \notag \\
&=&\left( q-q^{-1}\right) ^{3}q^{\sum_{k=1}^{4}i_{k}}\prod\limits_{1\leq
m<n\leq 4}[i_{m}-i_{n}]L_{\sum_{k=1}^{4}i_{k}}.
\end{eqnarray}

\begin{eqnarray}
\bullet &&{\llbracket}L_{i_{1}},L_{i_{2}},L_{i_{3}},L_{i_{4}},L_{i_{5}}{%
\rrbracket}  \notag \\
&=&\left( q-q^{-1}\right) ^{-4}\det \left(
\begin{array}{ccccc}
q^{-4i_{1}} & q^{-4i_{2}} & q^{-4i_{3}} & q^{-4i_{4}} & q^{-4i_{5}} \\
q^{-2i_{1}} & q^{-2i_{2}} & q^{-2i_{3}} & q^{-2i_{4}} & q^{-2i_{5}} \\
1 & 1 & 1 & 1 & 1 \\
q^{2i_{1}} & q^{2i_{2}} & q^{2i_{3}} & q^{2i_{4}} & q^{2i_{5}} \\
q^{4i_{1}} & q^{4i_{2}} & q^{4i_{3}} & q^{4i_{4}} & q^{4i_{5}}%
\end{array}%
\right) L_{\sum_{k=1}^{5}i_{k}}  \notag \\
&=&\left( q-q^{-1}\right) ^{6}q^{\sum_{k=1}^{5}i_{k}}\prod\limits_{1\leq
m<n\leq 5}[i_{m}-i_{n}]L_{\sum_{k=1}^{5}i_{k}}.
\end{eqnarray}
\begin{eqnarray}
\bullet &&{\llbracket}%
L_{i_{1}},L_{i_{2}},L_{i_{3}},L_{i_{4}},L_{i_{5}},L_{i_{6}}{\rrbracket}
\notag \\
&=&-\left( q-q^{-1}\right) ^{-5}\det \left(
\begin{array}{cccccc}
q^{-4i_{1}} & q^{-4i_{2}} & q^{-4i_{3}} & q^{-4i_{4}} & q^{-4i_{5}} &
q^{-4i_{6}} \\
q^{-2i_{1}} & q^{-2i_{2}} & q^{-2i_{3}} & q^{-2i_{4}} & q^{-2i_{5}} &
q^{-2i_{6}} \\
1 & 1 & 1 & 1 & 1 & 1 \\
q^{2i_{1}} & q^{2i_{2}} & q^{2i_{3}} & q^{2i_{4}} & q^{2i_{5}} & q^{2i_{6}}
\\
q^{4i_{1}} & q^{4i_{2}} & q^{4i_{3}} & q^{4i_{4}} & q^{4i_{5}} & q^{4i_{6}}
\\
q^{6i_{1}} & q^{6i_{2}} & q^{6i_{3}} & q^{6i_{4}} & q^{6i_{5}} & q^{6i_{6}}%
\end{array}%
\right) L_{\sum_{k=1}^{6}i_{k}}  \notag \\
&=&-\left( q-q^{-1}\right) ^{10}q^{\sum_{k=1}^{6}i_{k}}\prod\limits_{1\leq
m<n\leq 6}[i_{m}-i_{n}]L_{\sum_{k=1}^{6}i_{k}}.
\end{eqnarray}


\section{$q$-deformed $SDiff(T^2)$ $n$-algebra}



\subsection{Sine $3$-algebra and $q$-deformed $SDiff(T^2)$ $3$-algebra}


The quantum differential calculus on the quantum plane $\mathbf{C}_{q}[x,y]$
have been well investigated \cite{JMP92}.
For the the quantum plane $\mathbf{C}_{q}[x,y]$, each of its elements is a
finite linear combination of the monomes $y^{n}x^{m}$, satisfying
\begin{equation}
x^{m}y^{n}=q^{nm}y^{n}x^{m}, \ \ m,n\in \mathbf{N}.  \label{1}
\end{equation}

The Gauss derivatives on $\mathbf{C}_{q}[x,y]$ can be extended to be formal
pseud-differential operators $D_{x},D_{y}$ which can be defined on the set $%
\mathbf{C}_{q}[[x,y]]$ and satisfy%
\begin{equation}
D_{x}^{n}D_{y}^{m}=q^{nm}D_{y}^{m}D_{x}^{n},\ \ m,n\in \mathbf{Z},
\label{dq2}
\end{equation}%
where $\mathbf{C}_{q}[[x,y]]$ is a set of all Laurent series in $x,y$ such
that $\left( \ref{1}\right) $ is valid for $n,m\in \mathbf{Z.}$

In terms of the pseud-differntial operators $D_{x}$ and $D_{y}$,
Kinani et al. \cite{Kinani357} introduced the following generators:
\begin{eqnarray}  \label{qgt}
T_{n}=q^{n_{1}\cdot n_{2}/2}\cdot D_{y}^{n_{1}}D_{x}^{n_{2}},
\end{eqnarray}
where $n=(n_{1},n_{2})\in \mathbf{Z}^2$. In the rest of this paper, we
denote the subscript $l$ on $T_l$ being a two-dimensional vector with
integer components.

By means of $\left( \ref{dq2}\right) $, it is easy to verify that the
generators $T_{n}$ (\ref{qgt}) satisfy
\begin{equation}
T_{n}T_{m}=q^{\frac{1}{2}m\wedge n}T_{n+m},  \label{pro of T}
\end{equation}%
where $m\wedge n=m_{1}n_{2}-m_{2}n_{1}$.

Thus we have the algebra
\begin{equation}
[T_{m}, T_{n}]=T_{m}T_{n}-T_{n}T_{m}=(q^{\frac{1}{2}%
n\wedge m}-q^{\frac{1}{2}m\wedge n})T_{m+n}.  \label{tt1}
\end{equation}

When $q=exp(-2\pi $\textbf{$i$}$\alpha )$, (\ref{tt1}) becomes the
sine algebra \cite{Fairlie224}
\begin{equation}\label{sine1}
[T_{m}, T_{n}]=2\mathbf{i}\sin (\pi \alpha m\wedge n)T_{m+n},
\end{equation}
where $\alpha$ is an arbitrary constant.

Taking the rescaled generators $\bar{T}_{m}=-\frac{\mathbf{i}}{2\pi
\alpha }T_{m}$, we note that in the limit $\alpha\rightarrow 0$,
(\ref{sine1}) leads to the $SDiff(T^{2})$ algebra \cite{Floratos201,Antoniadis300}
\begin{equation}
\lbrack \bar{T}_{m}, \bar{T}_{n}]=(m\wedge n)\bar{T}_{m+n}.  \label{sdiff}
\end{equation}

The $q$-deformation of $SDiff(T^{2})$ algebra (\ref{sdiff}) is given by \cite{Kinani357}
\begin{equation}
\lbrack \bar{T}_{m}, \bar{T}_{n}]_{(q^{\frac{3}{2}m\wedge n},q^{\frac{3}{2}%
n\wedge m})}=q^{\frac{3}{2}m\wedge n}\bar{T}_{m} \bar{T}_{n}-q^{\frac{3%
}{2}n\wedge m}\bar{T}_{n} \bar{T}_{m}=[m\wedge n]\bar{T}_{m+n},
\label{1.2}
\end{equation}%
where $\bar{T}_{m}=\frac{1}{q-q^{-1}}T_{m}$.

Let us turn to the case of 3-algebra. Substituting the generators
(\ref{qgt}) into the operator Nambu 3-bracket (\ref{eq:3bracket}) and using
(\ref{pro of T}) and (\ref{tt1}), by direct calculation, we may derive the following 3-algebra:
\begin{eqnarray}\label{qq4}
[T_{m}, T_{n}, T_{k}]
&=&(-q^{\frac{1}{2}\left( m\wedge n-n\wedge k+k\wedge m\right) }+q^{-\frac{1%
}{2}\left( m\wedge n-n\wedge k+k\wedge m\right) } \notag \\
&-&q^{\frac{1}{2}\left(
m\wedge n+n\wedge k-k\wedge m\right) }+q^{-\frac{1}{2}\left( m\wedge
n+n\wedge k-k\wedge m\right) }  \notag \\
&-&q^{\frac{1}{2}\left( -m\wedge n+n\wedge k+k\wedge m\right) }+q^{-\frac{1}{%
2}\left( -m\wedge n+n\wedge k+k\wedge m\right) })T_{m+n+k}.
\end{eqnarray}
Performing straightforward calculations, we find that the
3-algebra with the $q$ parameter (\ref{qq4}) does not satisfy the
FI (\ref{eq:FI}) and the sh-Jacobi's identity (\ref{eq:shuffle}).

An interesting case is for the special value of $q$. Taking $q=e^{-\pi
\mathbf{i}}$, we may rewrite (\ref{qq4}) as
\begin{eqnarray}
[T_{m}, T_{n}, T_{k}]&=&2\mathbf{i}(\sin (\frac{\pi }{2}\left( m\wedge
n-n\wedge k+k\wedge m\right) )+\sin (\frac{\pi }{2}\left( m\wedge n+n\wedge
k-k\wedge m\right) )  \notag  \label{3 of sin} \\
&+&\sin (\frac{\pi }{2}\left( -m\wedge n+n\wedge k+k\wedge m\right)
))T_{m+n+k}.
\end{eqnarray}

Not as the case of (\ref{qq4}), an intriguing property of (\ref{3 of sin})
is that it does satisfy the FI (\ref{eq:FI}). Since the skew symmetry  also
holds, the sine 3-algebra (\ref{3 of sin}) is indeed a Fillipov 3-algebra.

Let us take the rescaled generators $\bar{T}_{n}=\frac{1}{\left(
q-q^{-1}\right) ^{1/2}}T_{n}$ and define the $q$-3-bracket
\begin{eqnarray}
{\llbracket}\bar{T}_{m}, \bar{T}_{n}, \bar{T}_{k}{\rrbracket} &=&\bar{T}%
_{m}\ast \lbrack \bar{T}_{n},\bar{T}_{k}]_{(q^{\frac{3}{2}n\wedge k},q^{%
\frac{3}{2}k\wedge n})}+\bar{T}_{n}\ast \lbrack \bar{T}_{k},\bar{T}%
_{m}]_{(q^{\frac{3}{2}k\wedge m},q^{\frac{3}{2}m\wedge k})}  \notag
\label{eq:Tq3bracket} \\
&+&\bar{T}_{k}\ast \lbrack \bar{T}_{m},\bar{T}_{n}]_{(q^{\frac{3}{2}m\wedge
n},q^{\frac{3}{2}n\wedge m})},
\end{eqnarray}%
where the star product is given by
\begin{equation}
\bar{T}_{m}\ast \lbrack \bar{T}_{n}, \bar{T}_{k}]_{(q^{\frac{3}{2}n\wedge
k},q^{\frac{3}{2}k\wedge n})}=q^{\frac{3}{2}m\wedge \left( n+k\right) }\bar{T%
}_{m}\lbrack \bar{T}_{n}, \bar{T}_{k}]_{(q^{\frac{3}{2}n\wedge k},q^{%
\frac{3}{2}k\wedge n})}.  \label{star pro of q-torus}
\end{equation}%
Then we have the $q$-deformed $SDiff(T^{2})$ 3-algebra
\begin{eqnarray}\label{qsdiff3}
{\llbracket}\bar{T}_{m}, \bar{T}_{n},\bar{T}_{k}{\rrbracket}
&=&([m\wedge n-n\wedge k+k\wedge m]+[m\wedge n+n\wedge k-k\wedge
m]\notag\\
&+&[-m\wedge n+n\wedge k+k\wedge m])\bar{T}_{m+n+k}  \notag \\
&=&([\det \left(
\begin{array}{ccc}
m_{1} & n_{1} & k_{1} \\
m_{2} & n_{2} & k_{2} \\
-1 & 1 & 1%
\end{array}%
\right) ]+[\det \left(
\begin{array}{ccc}
m_{1} & n_{1} & k_{1} \\
m_{2} & n_{2} & k_{2} \\
1 & -1 & 1%
\end{array}%
\right) ]\notag\\
&+&[\det \left(
\begin{array}{ccc}
m_{1} & n_{1} & k_{1} \\
m_{2} & n_{2} & k_{2} \\
1 & 1 & -1%
\end{array}%
\right) ])\bar{T}_{m+n+k}.
\end{eqnarray}%

As the case of (\ref{qq4}),  the infinite-dimensional $q$-deformed
3-algebra (\ref{qsdiff3}) does not satisfy the FI (\ref{eq:FI})
and the sh-Jacobi's identity (\ref{eq:shuffle}).

In the limit $q\rightarrow 1$, (\ref{qsdiff3}) reduces to the $SDiff(T^2)$
3-algebra \cite{Chen2012}
\begin{eqnarray}\label{47}
[\bar{T}_{m}, \bar{T}_{n}, \bar{T}_{k}]=\left( m\wedge n+n\wedge k+k\wedge
m\right) \bar{T}_{m+n+k}.
\end{eqnarray}
The FI (\ref{eq:FI}) is satisfied for this infinite-dimensional
3-algebra.
Taking $\bar{T}_{k}=\bar{T}_{0}$ in (\ref{47}),
(\ref{47}) can be regarded as the parametrized bracket relation
$[\bar{T}_{m}, \bar{T}_{n}]_{\bar{T}_{0}}$.
This parametrized bracket relation gives rise to the $SDiff(T^2)$
algebra (\ref{sdiff}).


\subsection{(co)Sine $n$-algebra}


For the generators $T_{n}$ (\ref{qgt}), we note that they are the
associative operators with the product (\ref{pro of T}). According to the
definition of the $n$-bracket (\ref{eq:lbracket}), we get the following
result.

\begin{theorem}
\label{th:4 copy(1)}The generators (\ref{qgt}) satisfy
the following closed algebraic structure relation:%
\begin{equation}
{[}T_{i_{1}}, \cdots , T_{i_{n}}{]}=\epsilon _{i_{1}\cdots i_{n}}^{j_{1}\cdots
j_{n}}q^{\frac{1}{2}\Sigma _{k>s}j_{k}\wedge j_{s}}T_{\Sigma
_{l=1}^{n}i_{l}}.  \label{prod of T}
\end{equation}
\end{theorem}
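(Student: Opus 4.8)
The plan is to prove \eqref{prod of T} by induction on $n$, using the recursive definition of the $n$-bracket \eqref{eq:lbracket} together with the multiplication rule \eqref{pro of T} and the case $n=2$. For $n=2$ the claim is immediate: $[T_{i_1},T_{i_2}]=T_{i_1}T_{i_2}-T_{i_2}T_{i_1}=(q^{\frac12 i_2\wedge i_1}-q^{\frac12 i_1\wedge i_2})T_{i_1+i_2}$, which is exactly $\epsilon^{j_1 j_2}_{i_1 i_2}q^{\frac12 j_2\wedge j_1}T_{i_1+i_2}$ since the antisymmetrization over the two indices produces precisely these two terms with the relative sign. The base cases $n=2,3$ also match \eqref{qq4} after identifying $\epsilon^{j_1j_2j_3}_{i_1i_2i_3}$ with the full antisymmetrization.

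For the inductive step, assume \eqref{prod of T} holds for $n-1$. Apply \eqref{eq:lbracket}:
\begin{equation*}
[T_{i_1},\dots,T_{i_n}]=\sum_{s=1}^{n}(-1)^{s+1}T_{i_s}\,[T_{i_1},\dots,\widehat{T_{i_s}},\dots,T_{i_n}].
\end{equation*}
By the induction hypothesis each inner bracket is $\epsilon^{\cdots}_{\cdots}\,q^{\frac12\Sigma_{k>t}j_k\wedge j_t}T_{\,\Sigma_{l\neq s}i_l}$, where the antisymmetrization runs over the $n-1$ remaining indices. Using \eqref{pro of T} to move $T_{i_s}$ past that monomial picks up an extra factor $q^{\frac12(\Sigma_{l\neq s}i_l)\wedge i_s}$, and the result is proportional to $T_{\Sigma_{l=1}^n i_l}$ as required. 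The key combinatorial identity to establish is that summing over $s$ with the signs $(-1)^{s+1}$, and folding in these extra bilinear phase factors, reassembles the full $n$-index antisymmetrization $\epsilon^{j_1\cdots j_n}_{i_1\cdots i_n}$ with the single quadratic form $q^{\frac12\Sigma_{k>s}j_k\wedge j_s}$. Concretely, the phase $\tfrac12\Sigma_{k>t}j_k\wedge j_t$ over the $(n-1)$-subset plus $\tfrac12(\Sigma_{l\neq s}j_l)\wedge j_s$ — after relabeling so that $i_s$ sits in the first slot — is exactly the phase $\tfrac12\Sigma_{k>t}j_k\wedge j_t$ for the full $n$-tuple, because $\wedge$ is bilinear and antisymmetric; the only subtlety is bookkeeping the position of the extracted index within the ordered sum $\Sigma_{k>t}$, which is where the signs $(-1)^{s+1}$ enter to restore total antisymmetry.

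I expect the main obstacle to be precisely this bookkeeping: verifying that the Laplace-type expansion of $\epsilon^{j_1\cdots j_n}_{i_1\cdots i_n}$ along one index is compatible with the way the quadratic phase $\Sigma_{k>s}j_k\wedge j_s$ decomposes when one index is split off. One clean way to handle it is to observe that both sides of \eqref{prod of T} are manifestly totally antisymmetric in $i_1,\dots,i_n$ (the left side by the skew-symmetry of the $n$-bracket noted after Definition 2.1, the right side by the $\epsilon$ symbol), so it suffices to check equality on a single ordering, say $i_1,\dots,i_n$ fixed, where \eqref{eq:lbracket} and \eqref{pro of T} give a completely explicit sum that can be matched term by term against the cofactor expansion of the antisymmetrized phase. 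Once the scalar identity
\begin{equation*}
\sum_{s=1}^{n}(-1)^{s+1}\,q^{\frac12 i_s\wedge(\Sigma_{l\neq s}i_l)}\,\epsilon^{j_1\cdots j_{n-1}}_{i_1\cdots\widehat{i_s}\cdots i_n}\,q^{\frac12\Sigma_{k>t}j_k\wedge j_t}=\epsilon^{j_1\cdots j_n}_{i_1\cdots i_n}\,q^{\frac12\Sigma_{k>t}j_k\wedge j_t}
\end{equation*}
is verified, the induction closes and the theorem follows.
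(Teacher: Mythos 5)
Your proposal is correct in substance and rests on the same inductive skeleton as the paper's proof, but it organizes the work differently. The paper first establishes, by induction on the recursive definition (\ref{eq:lbracket}) and using only associativity, the phase-free identity ${[}T_{i_{1}},\cdots ,T_{i_{n}}{]}=\epsilon _{i_{1}\cdots i_{n}}^{j_{1}\cdots j_{n}}T_{j_{1}}\cdots T_{j_{n}}$ (equation (\ref{commu of T copy(1)})); the inductive step there reduces to the cofactor expansion $\sum_{l}(-1)^{l-1}\epsilon _{i_{1}\cdots \hat{i}_{l}\cdots i_{n+1}}^{j_{2}\cdots j_{n+1}}\delta _{i_{l}}^{j_{1}}=\epsilon _{i_{1}\cdots i_{n+1}}^{j_{1}\cdots j_{n+1}}$ of the generalized Kronecker delta, with no phases in sight, and the quadratic phase $q^{\frac12\Sigma_{k>s}j_k\wedge j_s}$ is produced only once at the very end by evaluating the ordered monomial $T_{j_1}\cdots T_{j_n}$ via (\ref{pro of T}). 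You instead carry the phases through the induction, which forces you to verify the combined combinatorial-plus-phase identity you display. That identity is in fact true and your sketch of why is essentially right: grouping the permutations contributing to $\epsilon _{i_{1}\cdots i_{n}}^{j_{1}\cdots j_{n}}q^{\frac12\Sigma_{k>t}j_k\wedge j_t}$ according to which index occupies the first slot, the terms of the full quadratic form with $t=1$ collapse by bilinearity to $\tfrac12\bigl(\Sigma_{l\neq s}i_l\bigr)\wedge i_s$, which is exactly the phase produced by multiplying $T_{i_s}$ onto $T_{\Sigma_{l\neq s}i_l}$ from the left, while the remaining terms reproduce the $(n-1)$-index quadratic form, and the cost $(-1)^{s-1}$ of moving $i_s$ to the front supplies the alternating signs. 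Two small cautions: your displayed scalar identity has the prefactor written as $q^{\frac12 i_s\wedge(\Sigma_{l\neq s}i_l)}$, which has the wrong sign relative to both the product rule (\ref{pro of T}) and your own in-text statement — it should be $q^{\frac12(\Sigma_{l\neq s}i_l)\wedge i_s}$; and the identity is left as something ``to be verified'' rather than verified, though the verification is the one-line permutation-grouping argument above. The paper's factorization through (\ref{commu of T copy(1)}) buys you exactly this: all the delicate bookkeeping of signs and phases is replaced by a standard determinant expansion plus a single application of the product rule.
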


\begin{proof}
The $n$-bracket $\left( \ref{prod of T}\right) $ \ will follow from $\left( \ref%
{pro of T}\right) $ if we can show that
\begin{equation}
{[}T_{i_{1}}, \cdots , T_{i_{n}}{]}=\epsilon _{i_{1}\cdots i_{n}}^{j_{1}\cdots
j_{n}}T_{j_{1}} T_{j_{1}} \cdots T_{j_{n}}.
\label{commu of T copy(1)}
\end{equation}%
First, let us prove $\left( \ref{commu of T copy(1)}\right) $ by the
mathematical induction for $n$. By $\left( \ref{tt1}\right) ,$ it is obvious
that $\left( \ref{commu of T copy(1)}\right) $ holds for $n=2$.
We suppose $\left( \ref{commu of T copy(1)}\right) $ is satisfied for $n$-bracket.
Note that the generators $T_{i}$ are the associative operators
under the product (\ref{pro of T}), we obtain
\begin{eqnarray}
\lbrack T_{i_{1}}, \cdots , T_{i_{n+1}}] &=&\sum_{l=1}^{n+1}\left( -1\right)
^{l-1}T_{i_{l}} \lbrack T_{i_{1}},\cdots ,\hat{T}_{i_{l}},\cdots
,T_{i_{n+1}}] \notag \\
&=&\sum_{l=1}^{n+1}\left( -1\right) ^{l-1}\epsilon _{i_{1}\cdots \hat{i}%
_{l}\cdots i_{n+1}}^{j_{2}\cdots j_{n+1}}T_{i_{l}} \left(
T_{j_{2}} \cdots  T_{j_{n+1}}\right) \notag \\
&=&\sum_{l=1}^{n+1}\left( -1\right) ^{l-1}\epsilon _{i_{1}\cdots \hat{i}%
_{l}\cdots i_{n+1}}^{j_{2}\cdots j_{n+1}}\left( \delta
_{i_{l}}^{j_{1}}T_{j_{1}}\right)  \left( T_{j_{2}} \cdots
T_{j_{n+1}}\right) \notag \\
&=&\left( \sum_{l=1}^{n+1}\left( -1\right) ^{l-1}\epsilon _{i_{1}\cdots \hat{%
i}_{l}\cdots i_{n+1}}^{j_{2}\cdots j_{n+1}}\delta _{i_{l}}^{j_{1}}\right)
T_{j_{1}} T_{j_{2}} \cdots  T_{j_{n+1}} \notag \\
&=&\epsilon _{i_{1}\cdots i_{n+1}}^{j_{1}\cdots j_{n+1}}T_{j_{1}}
T_{j_{2}} \cdots T_{j_{n+1}},
\end{eqnarray}
which shows that $\left( \ref{commu of T copy(1)}\right) $ is also
satisfied for $n+1$-bracket.

Substituting (\ref{pro of T}) into (\ref{commu of T copy(1)}),  we obtain ( \ref{prod of T}).
The proof is completed.
\end{proof}

When $n=3$ in ( \ref{prod of T}), we have known that the corresponding 3-algebra
(\ref{qq4}) does not satisfy the FI (\ref{eq:FI}) and the sh-Jacobi's
identity (\ref{eq:shuffle}).
Let us now analyze the property of $n$-algebra (\ref{prod of T}) for $n\ge 4$.

\begin{proposition}
\label{Proposition n even is sh}When $n$ is even, the $n$-algebra (\ref{prod
of T}) is a sh-$n$-Lie algebra.
\end{proposition}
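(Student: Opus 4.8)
The plan is to proceed exactly as in the proof of Proposition~\ref{th:4}, exploiting the fact that, by Theorem~\ref{th:4 copy(1)}, the structure of the $n$-bracket (\ref{prod of T}) is again governed by a L\'evi-Civit\`a symbol contracted against a pure $q$-power, so that nesting two brackets produces a product of two $\epsilon$-symbols. Concretely, I would first fix even $n$ and use (\ref{prod of T}) to compute the nested bracket
\begin{equation*}
{[}{[}T_{i_{1}},\cdots,T_{i_{n}}{]},T_{i_{n+1}},\cdots,T_{i_{2n-1}}{]}
=\epsilon_{i_{1}\cdots i_{n}}^{j_{1}\cdots j_{n}}\,\epsilon_{i_{n+1}\cdots i_{2n-1}}^{j_{n+1}\cdots j_{2n-1}}\;q^{\beta}\;T_{\Sigma_{l=1}^{2n-1}i_{l}},
\end{equation*}
where $\beta$ is an explicit quadratic form in the $j$'s coming from combining the two wedge-sums $\frac12\sum_{k>s}j_k\wedge j_s$ (one over $j_1,\dots,j_n$, one over the contracted index and $j_{n+1},\dots,j_{2n-1}$) together with the cross terms $\frac12\,(\text{inner block})\wedge(\text{outer labels})$ generated by sliding $T_{\Sigma j}$ past the remaining $T$'s. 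Then I would substitute this into the left-hand side of the sh-Jacobi identity in its $\epsilon$-form (\ref{emisila sh}), i.e. contract with $\epsilon_{m_1\cdots m_{2n-1}}^{i_1\cdots i_{2n-1}}$, and use the contraction formula (\ref{equal of emixi}) to collapse the $i$-indices, leaving a single antisymmetrized $\epsilon$ in the $j$'s times $\sum(-1)^{k}q^{\alpha_k}$ for an explicit family of exponents $\alpha_k$.

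The key step — and the place where evenness of $n$ must enter — is to show that this remaining sum vanishes by antisymmetry: after the contraction, the surviving $\epsilon^{j_1\cdots\widehat{j_{?}}\cdots}_{m_1\cdots m_{2n-1}}$ is totally antisymmetric in its upper indices, so the whole expression is zero as soon as the exponent $\alpha$ assigns the \emph{same} $q$-power coefficient to (at least) two distinct indices $j_\mu,j_\nu$, forcing a symmetric factor against an antisymmetric one. I would therefore read off the coefficient of each $j_\mu$ in $\beta$ (a linear function of the other indices through the wedge products) and check that, for even $n$, two of these coefficients always coincide — this is precisely the analogue of the ``coefficients of two different $j_\mu$ are equal'' observation used for odd $n$ in Proposition~\ref{th:4}, and it is exactly here that the oddness of the ``extra row'' / the parity of the number of terms matters, mirroring why (\ref{qq4}) (the $n=3$ case) fails. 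Finally, skew-symmetry of (\ref{prod of T}) is immediate since its structure constants are given by a determinant ($\epsilon$-symbol), so once the sh-Jacobi identity is established the $n$-algebra is a sh-$n$-Lie algebra by Definition~\ref{th:4}... (the definition of sh-$n$-Lie algebra).

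The main obstacle I anticipate is purely bookkeeping: correctly tracking the quadratic exponent $\beta$ through the repeated use of (\ref{pro of T}) when the inner bracket $T_{\Sigma_{l=1}^{n}i_l}$ is multiplied into the outer slots, and then through the relabeling forced by (\ref{equal of emixi}). One has to be careful that the wedge form $\frac12\,j_k\wedge j_s$ does not itself introduce a spurious antisymmetry that would kill terms one needs, and conversely that the genuine antisymmetry of the free $\epsilon$ is not spoiled by a $j$-dependence hiding in the $T_{\Sigma j}$ subscript (it is not, since that subscript is the fully symmetric $\sum_l i_l$). I expect the odd-$n$ case to genuinely fail this last check — consistent with the already-noted failure at $n=3$ — so the proposition is stated only for even $n$, and I would remark that the even case goes through by the ``similar way'' alluded to for the V-W $n$-algebra, with the parity entering through the coincidence of two coefficients in $\beta$.
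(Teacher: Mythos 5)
Your setup is the right one and matches the paper's: use Theorem~\ref{th:4 copy(1)} (in the form (\ref{commu of T copy(1)})) to expand the nested bracket, contract with the outer $\epsilon_{m_1\cdots m_{2n-1}}^{i_1\cdots i_{2n-1}}$, and collapse the three $\epsilon$-symbols with (\ref{equal of emixi}). The gap is in the step you flag as "key": the vanishing for even $n$ does \emph{not} come from two indices $j_\mu,j_\nu$ acquiring equal coefficients in the exponent. That mechanism is specific to Proposition~\ref{th:4}, where the exponent $\alpha$ in (\ref{eq:alpha}) is a \emph{linear} form $\sum_\mu c_\mu j_\mu$ with scalar coefficients $c_\mu$, so a coincidence $c_\mu=c_\nu$ makes $q^{\alpha}$ symmetric under $j_\mu\leftrightarrow j_\nu$ and kills the term against the antisymmetric $\epsilon$. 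Here the exponent is the \emph{quadratic} form $\tfrac12\sum_{k>s}j_k\wedge j_s$ built from wedge products; after reduction each term carries the common factor $\epsilon_{k_1\cdots k_{2n-1}}^{j_1\cdots j_{2n-1}}q^{\frac12\sum_{1\le s<k\le 2n-1}j_k\wedge j_s}$, which is (up to normalization) exactly the structure constant of the $(2n-1)$-bracket (\ref{prod of T}) and is generically nonzero. No symmetric-versus-antisymmetric cancellation is available, and there is no hatted index to delete (the inner bracket contributes $n$ upper indices and the outer slots $n-1$, totalling $2n-1$, unlike the V-W computation).

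What actually makes the even case work is combinatorial. In the term where the inner block sits in position $l+1$ of the outer bracket, you use associativity and (\ref{pro of T}) to rewrite the product as $T_{j_{n+1}}\cdots T_{j_{n+l}}\left(T_{j_1}\cdots T_{j_n}\right)T_{j_{n+l+1}}\cdots T_{j_{2n-1}}$ contracted against $\epsilon$-symbols whose upper indices appear in that same permuted order. Restoring the canonical order $j_1\cdots j_{2n-1}$ moves a block of $n$ indices past $l$ indices, producing a sign $(-1)^{ln}$; combined with the $(-1)^{l}$ from the bracket definition, every term becomes the \emph{same} nonzero quantity times $(-1)^{l(n+1)}$, and
\begin{equation*}
\sum_{l=0}^{n-1}(-1)^{l(n+1)}=0 \quad\text{iff } n \text{ is even},
\end{equation*}
which is precisely the last line of (\ref{sjip}) in the paper (and the same device used again for (\ref{ttt'})). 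So your proposal as written would stall at the decisive step: you would find that no two "coefficients" coincide and wrongly conclude the identity fails, or be unable to conclude at all. Replacing that step by the block-transposition sign count fixes the argument; the rest of your outline (skew-symmetry from the $\epsilon$-symbol, the bookkeeping of the cross terms $(\Sigma_k j_k)\wedge j_l$) is sound.
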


\begin{proof}
Due to the skew-symmetry of $\epsilon _{i_{1}\cdots i_{n}}^{j_{1}\cdots
j_{n}}$ in (\ref{prod of T}), it is obvious that the skew-symmetry holds for
the $n$-bracket (\ref{prod of T}). We are left to show that the sh-Jacobi's
identity
\begin{eqnarray}\label{sjtn}
\epsilon _{k_{1}\cdots k_{2n-1}}^{i_{1}\cdots i_{2n-1}}[\left[
T_{i_{1}},\cdots ,T_{i_{n}}\right] ,T_{i_{n+1}},\cdots ,T_{i_{2n-1}}]=0
\end{eqnarray}
is satisfied.

Substituting (\ref{prod of T}) \ and (\ref{commu of T copy(1)})
into the left-hand side of the sh-Jacobi's identity (\ref{sjtn}), we obtain
\begin{eqnarray}\label{sjip}
&&\epsilon _{k_{1}\cdots k_{2n-1}}^{i_{1}\cdots i_{2n-1}}\epsilon
_{i_{1}\cdots i_{n}}^{j_{1}\cdots j_{n}}q^{\frac{1}{2}\Sigma _{1\leq s<k\leq
n}j_{k}\wedge j_{s}}[T_{\Sigma _{m=1}^{n}i_{m}},T_{i_{n+1}},\cdots
,T_{i_{2n-1}}]  \notag \\
&=&\sum_{l=0}^{n-1}\left( -1\right) ^{l}\epsilon _{k_{1}\cdots
k_{2n-1}}^{i_{1}\cdots i_{2n-1}}\epsilon _{i_{1}\cdots i_{n}}^{j_{1}\cdots
j_{n}}\epsilon _{\Sigma _{k=1}^{n}i_{k},i_{n+1},\cdots
,i_{2n-1}}^{j_{n+1}\cdots j_{n+l},\Sigma _{m=1}^{n}i_{m},j_{n+l+2},\cdots
,j_{2n}}  \notag \\
&& q^{\frac{1}{2}\Sigma _{1\leq s<k\leq n}j_{k}\wedge
j_{s}}T_{j_{n+1}} \cdots \cdot T_{j_{n+l}} \left( T_{\Sigma
_{k=1}^{n}i_{k}}\right)  T_{j_{n+l+2}} \cdots  T_{j_{2n-1}} \notag \\
&=&\sum_{l=0}^{n-1}\left( -1\right) ^{l}\epsilon _{k_{1}\cdots
k_{2n-1}}^{i_{1}\cdots i_{2n-1}}\epsilon _{i_{1}\cdots i_{n}}^{j_{1}\cdots
j_{n}}\epsilon _{i_{n+1}\cdots i_{2n-1}}^{j_{n+1}\cdots
j_{2n-1}}T_{j_{n+1}} \cdots  T_{j_{n+l}} \left(
T_{j_{1}} \cdots  T_{j_{n}}\right)  T_{j_{n+l+1}} \cdots
 T_{j_{2n-1}}  \notag \\
&=&\sum_{l=0}^{n-1}\left( -1\right) ^{l+l n}n!\left( n-1\right)
!\epsilon _{k_{1}\cdots k_{2n-1}}^{j_{1}\cdots j_{2n-1}}T_{j_{1}}
\cdots  T_{j_{2n-1}}  \notag \\
&=&n!\left( n-1\right) !\epsilon _{k_{1}\cdots k_{2n-1}}^{j_{1}\cdots
j_{2n-1}}q^{\frac{1}{2}\Sigma _{1\leq s<k\leq 2n-1}j_{k}\wedge
j_{s}}\sum_{l=0}^{n-1}\left( -1\right) ^{l\left( n+1\right) }T_{\Sigma
_{m=1}^{2n-1}k_{m}}.
\end{eqnarray}
Since $\sum_{l=0}^{n-1}\left( -1\right) ^{l\left(
n+1\right) }=0$ with even $n$, the right-hand side of (\ref{sjip})
equals zero. Therefore the sh-Jacobi's identity (\ref{sjtn}) holds for even $n$.
The proof is completed.
\end{proof}

We finally remark that when $n$ is odd, the $n$-algebra (\ref{prod of T})
is not a sh-$n$-Lie algebra.
For (\ref{sjip}) with odd $n$,  the coefficient of $T_{\Sigma _{m=1}^{2n-1}k_{m}}$
is
\begin{equation}
\left( n!\right) ^{2}\epsilon _{k_{1}\cdots k_{2n-1}}^{j_{1}\cdots
j_{2n-1}}q^{\frac{1}{2}\Sigma _{1\leq s<k\leq 2n-1}j_{k}\wedge j_{s}}.
\end{equation}%
Let us choose $k_{l}=\left( l,1\right)$,
we note that the coefficient of the monomial with the maximal power is
\begin{equation}
\left( n!\right) ^{2}\epsilon _{k_{1}\cdots k_{2n-1}}^{k_{1}\cdots
k_{2n-1}}=\left( n!\right) ^{2},
\end{equation}%
It is obvious that the sh-Jacobi's identity does not hold for this case.

By the similar way, we can confirm that the $n$-algebra (\ref{prod of T})
does not also satisfy the FI (\ref{eq:eFI}).

We have derived the $n$-bracket (\ref{prod of T}) with general $q\in \mathbf{%
C}$. Let us now focus on the case of the special value of $q$. Taking $%
q=\exp \left( -2\pi \mathbf{i}\alpha \right) $,  $\alpha\in R $,
we can express the $n$-bracket (\ref{prod of T}) as%
\begin{eqnarray}
&&[T_{i_{1}}, \cdots , T_{i_{n}}]  \notag  \label{sinnalg} \\
&=&\frac{1}{2}\left( \epsilon _{i_{1}\cdots i_{n}}^{j_{1}\cdots j_{n}}\exp
\left( \pi \mathbf{i}\alpha \Sigma _{k<s}j_{k}\wedge j_{s}\right) +\epsilon
_{i_{1}\cdots i_{n}}^{j_{n}\cdots j_{1}}\exp \left( \pi \mathbf{i}\alpha
\Sigma _{k>s}j_{k}\wedge j_{s}\right) \right) T_{\Sigma _{l=1}^{n}i_{l}}
\notag \\
&=&\frac{1}{2}\epsilon _{i_{1}\cdots i_{n}}^{j_{1}\cdots j_{n}}\cos \left(
\pi \alpha \Sigma _{k<s}j_{k}\wedge j_{s}\right) \left( 1+\left( -1\right) ^{%
\frac{n\left( n-1\right) }{2}}\right) T_{\Sigma _{l=1}^{n}i_{l}}  \notag \\
&&+\frac{\mathbf{i}}{2}\epsilon _{i_{1}\cdots i_{n}}^{j_{1}\cdots j_{n}}\sin
\left( \pi \alpha \Sigma _{k<s}j_{k}\wedge j_{s}\right) \left( 1-\left(
-1\right) ^{\frac{n\left( n-1\right) }{2}}\right) T_{\Sigma _{l=1}^{n}i_{l}}.
\end{eqnarray}

When $n$ is even, (\ref{sinnalg}) is a sh-$n$-Lie algebra. However not as
the case of (\ref{prod of T}) with odd $n$, we find that when $n=3$,
for the special value  $\alpha =\frac{1}{2}$,
(\ref{sinnalg}) gives a Fillipov 3-algebra (\ref{3 of sin}).

Let us consider the case of $n=5$. In this case, (\ref{sinnalg}) gives
\begin{equation}
\lbrack T_{i_{1}}, \cdots , T_{i_{5}}]=\epsilon _{i_{1}\cdots
i_{5}}^{j_{1}\cdots j_{5}}\cos \left( \pi \alpha \Sigma _{k<l}j_{k}\wedge
j_{l}\right) T_{i_{1}+\cdots +i_{5}}.  \label{sin5alg}
\end{equation}%
Taking $\alpha =\frac{1}{3}$ in (\ref{sin5alg}), it is interesting to
note that the sh-Jacobi's identity (\ref{eq:shJacobi}) holds,
but the FI (\ref{eq:eFI}) fails in this example.
Thus for this special $\alpha $, the cosine 5-algebra (\ref{sin5alg})
gives a sh-5-Lie algebra.


\subsection{$q$-deformed $SDiff(T^2)$ $n$-algebra}


To construct the $q$-deformed $SDiff(T^2)$ $n$-algebra, let us define a $q$-$%
n$-bracket as follows:
\begin{equation}
{\llbracket}\bar T_{i_{1}}, \bar T_{i_{2}},\cdots , \bar T_{i_{n}}{\rrbracket%
}=\sum\limits_{s=1}^{n}\left( -1\right) ^{s+1}\bar T_{i_{s}}\ast {\llbracket}%
\bar T_{i_{1}}, \bar T_{i_{2}},\cdots ,\hat{\bar T}_{i_{s}},\cdots , \bar
T_{i_{n}}{\rrbracket},  \label{eq:lbracket1}
\end{equation}%
where the generators are $\bar{T}_{n}=\frac{1}{\left( q-q^{-1}\right)
^{1/\left( n-1\right) }}T_{n}$ and the general star product is given by
\begin{equation}
\bar T_{i_{1}}\ast {\llbracket}\bar T_{i_{2}}, \bar T_{i_{3}},\cdots , \bar
T_{i_{n}}{\rrbracket}=q^{\frac{3}{2}i_{1}\wedge \left( i_{2}+\cdots
+i_{n}\right) }\bar T_{i_{1}} {\llbracket}\bar T_{i_{2}}, \bar
T_{i_{3}},\cdots , \bar T_{i_{n}}{\rrbracket},
\end{equation}%
According to the definition of the $n$-bracket (\ref{eq:lbracket1}),
in similarity with the case of (\ref{prod of T}),
we may derive the following $q$-deformed $SDiff(T^2)$ $n$-algebra:
\begin{equation}
{\llbracket}\bar{T}_{i_{1}}, \cdots , \bar{T}_{i_{n}}{\rrbracket}=\frac{%
\epsilon _{i_{1}\cdots i_{n}}^{j_{1}\cdots j_{n}}q^{\Sigma _{k<s}j_{k}\wedge
j_{s}}}{q-q^{-1}}\bar{T}_{\Sigma _{l=1}^{n}i_{l}}.  \label{ttt'}
\end{equation}
When $n=3$, (\ref{ttt'}) gives  the $q$-deformed $SDiff(T^2)$ 3-algebra (\ref{qsdiff3}).
In the limit $q\rightarrow 1$,  it is not hard to verify that (\ref{ttt'})
reduces to the null $n$-algebra
for $n\geq 4$,
\begin{equation} \label{nnalg}
[\bar{T}_{i_{1}},\cdots ,\bar{T}_{i_{n}}]=0.
\end{equation}

\begin{proposition}
\label{Proposition n even is sh copy(2)}When $n$ is even, the $n$-algebra (%
\ref{ttt'}) is a sh-$n$-Lie algebra.
\end{proposition}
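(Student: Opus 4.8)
The plan is to mimic closely the proof of Proposition \ref{Proposition n even is sh} (the even-$n$ case of the undeformed $n$-algebra (\ref{prod of T})), since the $q$-deformed algebra (\ref{ttt'}) has exactly the same combinatorial shape: its structure constant is $\epsilon_{i_1\cdots i_n}^{j_1\cdots j_n}q^{\Sigma_{k<s}j_k\wedge j_s}$ acting on $\bar T_{\Sigma i_l}$, differing from (\ref{prod of T}) only by the replacement of the exponent $\tfrac12\Sigma_{k>s}j_k\wedge j_s$ by $\Sigma_{k<s}j_k\wedge j_s$ and an overall $1/(q-q^{-1})$. First I would record the analogue of the auxiliary identity (\ref{commu of T copy(1)}): since the rescaled generators $\bar T_n=(q-q^{-1})^{-1/(n-1)}T_n$ are still associative with $\bar T_m\bar T_n=(q-q^{-1})^{?}q^{\frac12 m\wedge n}\bar T_{m+n}$ (the normalization constant being harmless for the antisymmetrization argument), the $q$-$n$-bracket (\ref{eq:lbracket1}) with its star product expands, by the same induction on $n$ used in Theorem \ref{th:4 copy(1)}, as a fully antisymmetrized product ${\llbracket}\bar T_{i_1},\cdots,\bar T_{i_n}{\rrbracket}=\frac{1}{q-q^{-1}}\epsilon_{i_1\cdots i_n}^{j_1\cdots j_n}\bar T_{j_1}\bar T_{j_2}\cdots\bar T_{j_n}$.

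Next I would substitute this expression, together with (\ref{ttt'}) itself, into the left-hand side of the sh-Jacobi identity in the $\epsilon$-form (\ref{emisila sh}),
\begin{equation*}
\epsilon_{k_1\cdots k_{2n-1}}^{i_1\cdots i_{2n-1}}{\llbracket}{\llbracket}\bar T_{i_1},\cdots,\bar T_{i_n}{\rrbracket},\bar T_{i_{n+1}},\cdots,\bar T_{i_{2n-1}}{\rrbracket},
\end{equation*}
and run through the same chain of manipulations displayed in (\ref{sjip}): expand the outer bracket as a sum over $l=0,\dots,n-1$ of the position at which the composite generator $\bar T_{\Sigma i_k}$ sits inside an antisymmetrized product, use the $\epsilon$-contraction formula (\ref{equal of emixi}) to collapse the nested Levi-Civita symbols, and reassemble everything as $n!(n-1)!\,\epsilon_{k_1\cdots k_{2n-1}}^{j_1\cdots j_{2n-1}}$ times a monomial $\bar T_{j_1}\cdots\bar T_{j_{2n-1}}$ (equivalently $q^{\text{phase}}\bar T_{\Sigma k_m}$) times the sign sum $\sum_{l=0}^{n-1}(-1)^{l(n+1)}$. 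For even $n$ the exponent $l(n+1)$ is odd exactly when $l$ is odd, so the sum is $1-1+1-\cdots=0$ over the $n$ terms, killing the whole expression; this is the single place where the parity of $n$ enters, exactly as in Proposition \ref{Proposition n even is sh}. Together with the manifest skew-symmetry inherited from $\epsilon_{i_1\cdots i_n}^{j_1\cdots j_n}$, this establishes that (\ref{ttt'}) is a sh-$n$-Lie algebra for even $n$.

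The main obstacle, such as it is, lies in verifying that the $q$-phase factors bookkept along the way genuinely combine into a single factor depending only on the $j$'s and not on $l$, so that they can be pulled out of the sign sum — i.e. checking that the star-product weights $q^{\frac32 i\wedge(\cdots)}$ in (\ref{eq:lbracket1}) conspire with the commutation weights $q^{\frac12 m\wedge n}$ so that each of the $n$ terms carries the identical exponent $\Sigma_{k<s}j_k\wedge j_s$ after relabeling. This is the reason the particular constant $\tfrac32$ was chosen in the star product (just as in the $3$-algebra case (\ref{star pro of q-torus})), and it is the computational heart of the argument; once it is confirmed the sign-sum cancellation finishes the proof. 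For odd $n$ the same computation yields $\sum_{l=0}^{n-1}(-1)^{l(n+1)}=\sum_l 1=n\neq0$, so the identity fails, which is consistent with the remark following Proposition \ref{Proposition n even is sh} and need not be belabored here.
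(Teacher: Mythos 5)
Your overall architecture matches the paper's: expand the nested bracket, collapse the Levi--Civita symbols with (\ref{equal of emixi}), relabel indices so the $q$-phase becomes position-independent, and reduce everything to the sign sum $\sum_{s=0}^{n-1}(-1)^{s(n+1)}$, which vanishes precisely for even $n$. That endgame, and the block relabeling $(j_1,\dots,j_{2n-1})\mapsto(j_{s+1},\dots,j_{s+n},j_1,\dots,j_s,j_{n+s+1},\dots,j_{2n-1})$ contributing the sign $(-1)^{sn}$, is exactly what the paper does in (\ref{311})--(\ref{312}).

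However, your proposed first step is false: the $q$-$n$-bracket (\ref{eq:lbracket1}) is \emph{not} equal to $\frac{1}{q-q^{-1}}\epsilon_{i_1\cdots i_n}^{j_1\cdots j_n}\bar T_{j_1}\cdots\bar T_{j_n}$. Already for $n=2$ the bracket is $q^{\frac32 m\wedge n}\bar T_m\bar T_n-q^{\frac32 n\wedge m}\bar T_n\bar T_m$, whose star-product weights are index-dependent and cannot be absorbed into an overall normalization; the plain antisymmetrized product of the $T$'s is by Theorem \ref{th:4 copy(1)} exactly the (co)sine bracket (\ref{prod of T}), with phase $q^{\frac12\Sigma_{k>s}j_k\wedge j_s}$, whereas (\ref{ttt'}) carries the different phase $q^{\Sigma_{k<s}j_k\wedge j_s}$. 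If you used your auxiliary identity literally, you would be re-proving Proposition \ref{Proposition n even is sh} for (\ref{prod of T}) rather than establishing anything about (\ref{ttt'}). You do flag the phase bookkeeping as ``the computational heart,'' but the lemma you propose to dispose of it contradicts it. The paper's proof avoids this by never antisymmetrizing: it substitutes the structure-constant form (\ref{ttt'}) for the inner bracket and the definition (\ref{eq:lbracket1}) for the outer one, keeps the four separate phase factors $q^{\Sigma_{1\le k<l\le n}j_k\wedge j_l}$, $q^{\Sigma_{n+1\le k<l\le 2n-1}j_k\wedge j_l}$, $q^{\Sigma_{n+1\le l\le n+s}j_l\wedge(\Sigma_{k=1}^n j_k)}$, $q^{\Sigma_{n+s+1\le l\le 2n-1}(\Sigma_{k=1}^n j_k)\wedge j_l}$ explicit, and only then checks via the block relabeling that they merge into the single $s$-independent factor $q^{\Sigma_{1\le k<l\le 2n-1}j_k\wedge j_l}$. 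That verification is the missing content of your proof; with your false lemma in place of it, the argument as written does not go through.
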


\begin{proof}
Due to the skew-symmetry of $\epsilon _{i_{1}\cdots i_{n}}^{j_{1}\cdots
j_{n}}$ in (\ref{ttt'}), it is obvious that the skew-symmetry holds for the $%
n$-bracket (\ref{ttt'}). We are left to show that the sh-Jacobi's identity
is satisfied.

Substituting (\ref{ttt'}) into the left-hand side of (\ref{sjtn})
and using the formula (\ref{equal of emixi}), we get
\begin{eqnarray}\label{311}
&&\frac{\epsilon _{k_{1}\cdots k_{2n-1}}^{i_{1}\cdots i_{2n-1}}\epsilon
_{i_{1}\cdots i_{n}}^{j_{1}\cdots j_{n}}q^{\Sigma _{1\leq k<l\leq
n}j_{k}\wedge j_{l}}}{q-q^{-1}}{\llbracket}\bar{T}_{\Sigma _{k=1}^{n}j_{k}},%
\bar{T}_{i_{n+1}},\cdots ,\bar{T}_{i_{2n-1}}{\rrbracket}  \notag \\
&=&\frac{1}{\left( q-q^{-1}\right) ^{2}}\sum_{s=0}^{n-1}\left( -1\right)
^{s}\epsilon _{k_{1}\cdots k_{2n-1}}^{i_{1}\cdots i_{2n-1}}\epsilon
_{i_{1}\cdots i_{n}}^{j_{1}\cdots j_{n}}\epsilon _{i_{n+1}\cdots
i_{2n-1}}^{j_{n+1}\cdots j_{2n-1}}q^{\Sigma _{1\leq k<l\leq n}\left(
j_{k}\wedge j_{l}\right) }q^{\Sigma _{n+1\leq k<l\leq 2n-1}\left(
j_{k}\wedge j_{_{l}}\right) }  \notag \\
&&\cdot q^{\Sigma _{n+1\leq l\leq n+s}j_{l}\wedge \left( \Sigma
_{k=1}^{n}j_{k}\right) }q^{\Sigma _{n+s+1\leq l\leq 2n-1}\left( \Sigma
_{k=1}^{n}j_{k}\right) \wedge j_{l}}\bar{T}_{\Sigma _{k=1}^{2n-1}i_{k}} \notag \\
&=&\frac{1}{\left( q-q^{-1}\right) ^{2}}\sum_{s=0}^{n-1}\left( -1\right)
^{s}n!\left( n-1\right) !\epsilon _{k_{1}\cdots k_{2n-1}}^{j_{1}\cdots
j_{2n-1}}q^{\Sigma _{1\leq k<l\leq n}\left( j_{k}\wedge j_{l}\right)
}q^{\Sigma _{n+1\leq k<l\leq 2n-1}
\left( j_{k}\wedge j_{_{l}}\right) }\notag \\
&&\cdot q^{\Sigma _{n+1\leq l\leq n+s}j_{l}\wedge \left( \Sigma
_{k=1}^{n}j_{k}\right) }q^{\Sigma _{n+s+1\leq l\leq 2n-1}\left( \Sigma
_{k=1}^{n}j_{k}\right) \wedge j_{l}}\bar{T}_{\Sigma _{k=1}^{2n-1}i_{k}}.
\end{eqnarray}%

Let us change the indices $\left( j_{1},\cdots ,j_{2n-1}\right) $ to be $\left(
j_{s+1},\cdots ,j_{s+n},j_{1},\cdots j_{s},j_{n+s+1},\cdots ,j_{2n-1}\right)
,$
thus the right-hand side of (\ref{311}) can be rewritten as
\begin{eqnarray}\label{312}
&&\frac{1}{\left( q-q^{-1}\right) ^{2}}\sum_{s=0}^{n-1}\left( -1\right)
^{s}n!\left( n-1\right) !\epsilon _{k_{1}\cdots k_{2n-1}}^{j_{s+1}\cdots
j_{s+n}j_{1}\cdots j_{s}j_{n+s+1}\cdots j_{2n-1}}q^{\Sigma _{1\leq k<l\leq
2n-1}j_{k}\wedge j_{l}}T_{\Sigma _{k=1}^{2n-1}i_{k}}  \notag \\
&=&\frac{1}{\left( q-q^{-1}\right) ^{2}}n!\left( n-1\right) !\epsilon
_{k_{1}\cdots k_{2n-1}}^{j_{1}\cdots j_{2n-1}}q^{\Sigma _{1\leq k<l\leq
2n-1}j_{k}\wedge j_{l}}\sum_{s=0}^{n-1}\left( -1\right) ^{s\left( n+1\right)
}T_{\Sigma _{k=1}^{2n-1}i_{k}}.
\end{eqnarray}%
When $n$ is even, we have $\sum_{s=0}^{n-1}\left( -1\right) ^{s\left(
n+1\right) }=0$.  It indicates that (\ref{311}) equals zero.
Therefore the sh-Jacobi's identity holds.
The proof is completed.
\end{proof}

As the case of (\ref{prod of T}), it is easy to check that when $n$ is odd,
the $n$-algebra (\ref{ttt'}) does not satisfy the sh-Jacobi's identity.
That is to say that (\ref{ttt'}) with odd $n$ is not a sh-$n$-Lie algebra.


\section{A physical realization of the (co)sine $n$-algebra}

Let consider a spinless non-relativistic electron moving on a Bravais
lattice in the $xy$-plane under the influence of a constant uniform magnetic
field $\mathbf{B}=B\mathbf{e}_{z}$. The Hamiltonian is \cite{Dereli}
\begin{equation}
H=\frac{1}{2\mu }(\pi _{x}^{2}+\pi _{y}^{2})+V\left( x,y\right) ,
\end{equation}
where the substrate potential $V(x,y)$ is periodic in $x$ and $y$, i.e., $%
V(x+a_{1},y)=V(x,y+a_{2})=V(x,y)$, with $a_{1}$ and $a_{2}$ being the unit
lattice spacing, the kinetic momentum operators are define by
\begin{equation}
\pi _{x}=p_{x}-\frac{e}{c}A_{x},\ \ \ \ \pi _{y}=p_{y}-\frac{e}{c}A_{y},
\end{equation}
in which $p_{x}=-\mathbf{i}\hbar \frac{\partial }{\partial x}$ and $p_{y}=-%
\mathbf{i}\hbar \frac{\partial }{\partial y}$ are the canonical momentum
operators, $\mathbf{A}=(A_{x},A_{y})$ is the vector potential and can be
given by
\begin{equation}
A_{x}=-\frac{B}{2}y+\frac{\partial \Lambda }{\partial x},\ \ \ \ A_{y}=-%
\frac{B}{2}x+\frac{\partial \Lambda }{\partial y}.
\end{equation}
Here $\Lambda $ is an arbitrary scalar function determining the gauge. For
simplicity, we choose $\Lambda =\frac{1}{2}Bxy$.

We take an arbitrary Bravais lattice vector as follows:
\begin{eqnarray}
{\bm R}_{m}=m_1{\bm a}_1+m_2{\bm a}_2,
\end{eqnarray}
where $m=(m_1, m_2)\in \mathbf{Z}^2$, ${\bm a}_1$ and ${\bm a}_2$ are two
given vectors in the directions ${\bm e}_x$ and ${\bm e}_y$, respectively.

Let ${\bm\beta} =\left( \beta _{1},\beta _{2}\right) $ be a vector which is
classically connected with the cyclotron center given by
\begin{eqnarray}
\beta _{1}=\pi _{x}-\mu \omega y,\ \ \beta _{2}=\pi _{y}+\mu \omega x,
\end{eqnarray}
where $\omega=eB/\mu c$ is the Larmor frequency.

Let us take the magnetic translation operators \cite{Dereli}
\begin{eqnarray}  \label{PGTS}
T_{m}=\exp (\sqrt{2\pi}\mathbf{i}{\bm R}_{m}\cdot {\bm \beta} /\hbar).
\end{eqnarray}
Note that they satisfy
\begin{eqnarray}  \label{TTTP}
T_{m}T_{n}=T_{m+n}\exp \left( \pi \mathbf{i}\alpha m\wedge n\right) ,
\end{eqnarray}
where $\alpha =\phi _{1}/\phi _{0}$ is the number of fluxons passing through
the unit cell, in which $\phi _{1}=\left( a_{1}\wedge a_{2}\right) \cdot
B,\phi _{0}=hc/e$ are the magnetic flux through the unit cell $a_{1}\wedge
a_{2},$ and the magnetic flux quantum, respectively.

For the magnetic translation operators (\ref{PGTS}), it was found that they generate the
the infinite-dimensional sine algebra (\ref{sine1}) \cite{Dereli}.
For the rescaled generators $\bar{T}_{m}=-\frac{\mathbf{i}}{2\pi \alpha }T_{m}$,
the $SDiff(T^{2})$ algebra (\ref{sdiff}) is recovered
in the limit $\alpha \rightarrow 0$.

Since the product of the generators satisfies (\ref{TTTP}), according to
theorem 5, it is known that the $n$ algebra with respect to the generators (%
\ref{PGTS}) is (\ref{sinnalg}).
Thus in terms of the magnetic translation operators (\ref{PGTS}),
we give an explicit physical realization of the (co)sine $n$-algebra (\ref{sinnalg}).

Let us now discuss first few $n$-algebras.
\begin{eqnarray}\label{Psin3}
\bullet \ \ [T_{i_{1}}, T_{i_{2}}, T_{i_{3}}] &=&2\mathbf{i}\left[ \sin \left(
\pi \alpha \left( {i}_{1}\wedge i_{2}+i_{1}\wedge {i}_{3}+i_{2}\wedge
i_{3}\right) \right) -\sin (\pi \alpha \left( -{i}_{1}\wedge
i_{2}+i_{1}\wedge {i}_{3}+i_{2}\wedge i_{3}\right) \right.  \notag\\
&&\left. -\sin \left( \pi \alpha \left( {i}_{1}\wedge i_{2}+i_{1}\wedge {i}%
_{3}-i_{2}\wedge {i}_{3}\right) \right) \right] T_{i_{1}+i_{2}+i_{3}}
\end{eqnarray}

Comparing (\ref{Psin3}) with (\ref{3 of sin}), we see that when $\alpha =%
\frac{1}{2}$, (\ref{Psin3}) becomes a Fillipov 3-algebra.
Let us take the rescaled generators 
$\bar{T}_{i_{j}}=\sqrt{\frac{-\mathbf{i}}{2\pi \alpha }}T_{i_{j}}$, $j=1,2,3$,
we see that in the limit $\alpha\rightarrow 0$, (\ref{Psin3}) gives
the $SDiff(T^2)$ 3-algebra (\ref{47}).
\begin{eqnarray}
\bullet \ \  &&{[}T_{i_{1}}, T_{i_{2}}, T_{i_{3}}, T_{i_{4}}{]}  \notag \\
&=&2\left( \cos \left( \pi \alpha \left( {i}_{1}\wedge i_{2}+{i}_{1}\wedge
i_{3}+{i}_{1}\wedge i_{4}+i_{2}\wedge i_{3}+i_{2}\wedge i_{4}+i_{3}\wedge {i}%
_{4}\right) \right) \right.   \notag \\
&&-\cos \left( \pi \alpha \left( {i}_{1}\wedge i_{2}+{i}_{1}\wedge i_{3}+{i}%
_{1}\wedge i_{4}+i_{2}\wedge i_{3}+i_{2}\wedge i_{4}-i_{3}\wedge {i}%
_{4}\right) \right)   \notag \\
&&-\cos \left( \pi \alpha \left( {i}_{1}\wedge i_{2}+{i}_{1}\wedge i_{3}+{i}%
_{1}\wedge i_{4}-i_{2}\wedge i_{3}+i_{2}\wedge i_{4}+i_{3}\wedge {i}%
_{4}\right) \right)   \notag \\
&&+\cos \left( \pi \alpha \left( {i}_{1}\wedge i_{2}+{i}_{1}\wedge i_{3}+{i}%
_{1}\wedge i_{4}-i_{2}\wedge i_{3}-i_{2}\wedge i_{4}+i_{3}\wedge {i}%
_{4}\right) \right)   \notag \\
&&+\cos \left( \pi \alpha \left( {i}_{1}\wedge i_{2}+{i}_{1}\wedge i_{3}+{i}%
_{1}\wedge i_{4}+i_{2}\wedge i_{3}-i_{2}\wedge i_{4}-i_{3}\wedge {i}%
_{4}\right) \right)   \notag \\
&&-\cos \left( \pi \alpha \left( {i}_{1}\wedge i_{2}+{i}_{1}\wedge i_{3}+{i}%
_{1}\wedge i_{4}-i_{2}\wedge i_{3}-i_{2}\wedge i_{4}-i_{3}\wedge {i}%
_{4}\right) \right)   \notag \\
&&-\cos \left( \pi \alpha \left( -{i}_{1}\wedge i_{2}+{i}_{1}\wedge i_{3}+{i}%
_{1}\wedge i_{4}+i_{2}\wedge i_{3}+i_{2}\wedge i_{4}+i_{3}\wedge {i}%
_{4}\right) \right)   \notag \\
&&+\cos \left( \pi \alpha \left( -{i}_{1}\wedge i_{2}+{i}_{1}\wedge i_{3}+{i}%
_{1}\wedge i_{4}+i_{2}\wedge i_{3}+i_{2}\wedge i_{4}-i_{3}\wedge {i}%
_{4}\right) \right)   \notag \\
&&+\cos \left( \pi \alpha \left( -{i}_{1}\wedge i_{2}-{i}_{1}\wedge i_{3}+{i}%
_{1}\wedge i_{4}+i_{2}\wedge i_{3}+i_{2}\wedge i_{4}+i_{3}\wedge {i}%
_{4}\right) \right)   \notag \\
&&-\cos \left( \pi \alpha \left( -{i}_{1}\wedge i_{2}-{i}_{1}\wedge i_{3}-{i}%
_{1}\wedge i_{4}+i_{2}\wedge i_{3}+i_{2}\wedge i_{4}+i_{3}\wedge {i}%
_{4}\right) \right)   \notag \\
&&-\cos \left( \pi \alpha \left( -{i}_{1}\wedge i_{2}+{i}_{1}\wedge i_{3}-{i}%
_{1}\wedge i_{4}+i_{2}\wedge i_{3}+i_{2}\wedge i_{4}-i_{3}\wedge {i}%
_{4}\right) \right)   \notag \\
&&\left. +\cos \left( \pi \alpha \left( -{i}_{1}\wedge i_{2}-{i}_{1}\wedge
i_{3}-{i}_{1}\wedge i_{4}+i_{2}\wedge i_{3}+i_{2}\wedge i_{4}-i_{3}\wedge {i}%
_{4}\right) \right) \right) T_{i_{1}+i_{2}+i_{3}+i_{4}}. \label{sine4}
\end{eqnarray}
For the cosine 4-algebra (\ref{sine4}) with the arbitrary value $\alpha\in R$,
it is a sh-4-Lie algebra.
Taking the rescaled generators
$\bar{T}_{i_{j}}=\sqrt[3]{\frac{-\mathbf{i}}{2\pi \alpha }}T_{i_{j}}$, $j=1,2,3,4$,
in (\ref{sine4}),
then when $\alpha\rightarrow 0$,
(\ref{sine4}) becomes the null 4-algebra
\begin{eqnarray}
[\bar{T}_{i_1}, \bar{T}_{i_2}, \bar{T}_{i_3}, \bar{T}_{i_4}]=0.
\end{eqnarray}

$\bullet $ \ \ When $n=5$, the corresponding 5-algebra is given by (\ref%
{sin5alg}). Taking $\alpha =\frac{1}{3}$ in (\ref{sin5alg}), it immediately
gives a sh-5-Lie algebra.
Let us take  rescaled generators 
$\bar{T}_{i_{j}}=\sqrt[4]{\frac{-\mathbf{i}}{2\pi \alpha }}T_{i_{j}}$, $j=1,2,\cdots ,5$,
in (\ref{sin5alg}), then we have  the null 5-algebra in the limit $\alpha\rightarrow 0$,
\begin{eqnarray}
[\bar{T}_{i_1}, \bar{T}_{i_2}, \bar{T}_{i_3}, \bar{T}_{i_4}, \bar{T}_{i_5}]=0.
\end{eqnarray}

Not as the case of the sine 3-algebra (\ref{Psin3}), we note that when $\alpha\rightarrow 0$,
the cosine 4-algebra (\ref{sine4}) and 5-algebra (\ref{sin5alg})
with respect to the  rescaled generators become the null 4 and 5-algebras, respectively.
For the (co)sine $n$-algebra (\ref{sinnalg}) with $n\ge 4$,
taking the  rescaled generators 
$\bar{T}_{i_{j}}=\sqrt[n-1]{\frac{-\mathbf{i}}{2\pi \alpha }}T_{i_{j}}$, $j=1,2,\cdots ,n$,
it is easy to verify that in the limit $\alpha\rightarrow 0$,
(\ref{sinnalg}) gives the null $n$-algebra (\ref{nnalg}).

\section{Concluding Remarks}

We have investigated the $q$-deformation of the infinite-dimensional $n$-algebras.
The V-W algebra is the centerless Virasoro algebra. Its $q$-deformation has
been well investigated in the literature. One has already known that in the
usual way, the V-W $n$-algebra is null. In this paper, we firstly investigated the
$q$-deformation of the null V-W $n$-algebra and constructed the nontrivial
$q$-deformed V-W $n$-algebra. We found that it satisfies the
sh-Jacobi's identity, but the FI fails. Thus this $q$-deformed
V-W $n$-algebra is indeed a sh-$n$-Lie algebra.
Furthermore in terms of the pseud-differential operators on the quantum plane,
we constructed the $q$-deformed $SDiff(T^2)$ $n$-algebra and proved that
the sh-Jacobi's identity holds for even $n$.
We also presented the (co)sine $n$-algebra which is the sh-$n$-Lie algebra
for the case of even $n$.
The interesting cases are for $n=3$ and $5$. We found that there exists a
sine 3-algebra  which is indeed a Fillipov 3-algebra.
When $n=5$, we derived a cosine 5-algebras which is a sh-5-Lie algebras.
An interesting open question is whether there exists the special values
such that the (co)sine $n$-algebra with general odd $n$ is the Fillipov $n$-algebra
or sh-$n$-Lie algebra.

It is worthwhile to mention that we introduce the appropriate star product
into the $q$-$n$-bracket. This star product plays an important role in
deriving the desired $q$-deformed V-W $n$-algebra.
For the case of the $q$-deformed $SDiff(T^2)$ $n$-algebra, we tried to
introduce the appropriate star product such that the sh-Jacobi's identity holds for odd $n$.
Unfortunately, we did not succeed in finding any one.
Whether there exists such kind of the star product still deserves further study.

Our investigation revealed a deep connection between the $q$-deformed
infinite-dimensional $n$-algebra and the sh-$n$-Lie algebra. One may
construct the sh-$n$-Lie algebra from the $q$-deformation point of view. It
sheds new light on the sh-$n$-Lie algebra. It would be interesting to study
further and see whether there exist the central extension terms for the
sh-$n$-Lie algebra derived in this paper.
Finally, it is worth to emphasize that we give
an explicit physical realization of the (co)sine $n$-algebra
in terms of the so-called magnetic translation operators. We believe that
the application of the $q$-deformed V-W and $SDiff(T^2)$  $n$-algebras
in physics should also be of interest.

\section*{Acknowledgements}

The authors are grateful to Morningside Center of Chinese Academy of
Sciences for providing excellent research environment and financial support
to our seminar in mathematical physics. This work is partially supported by
NSF projects (11375119 and 11475116), KZ201210028032.


\end{document}